\newcommand{\ZZ}{{\mathbb Z}}
\newcommand{\ii}{\mathrm{i}}
\newcommand{\Tr}{\mathrm{Tr}\,}
\newcommand{\slN}{\mathrm{sl}(N)}
\theoremstyle{plain}
\newtheorem{thm}{Theorem}
\newtheorem{lem}[thm]{Lemma}
\newtheorem{con}[thm]{Conjecture}
\newtheorem{prop}[thm]{Proposition}
\newtheorem{defn}[thm]{Definition}
\newtheorem{ex}[thm]{Example}
\title{Notes on color reductions and $\gamma$ traces}
\date{}
\author{Oliver Schnetz}
\address{Oliver Schnetz\\
II. Institut f\"ur Theoretische Physik\\
Luruper Chaussee 149\\
22761 Hamburg, Germany}
\email{schnetz@mi.uni-erlangen.de}
\tikzset{
	ad/.style={line width=1pt},
	fun/.style={line width=1pt, postaction={decorate},
		decoration={markings,mark=at position .55 with {\arrow[scale=.5,>=triangle 45]{>}}}},
}
\begin{document}

\begin{abstract}
We present efficient algorithms to calculate the color factors for the $SU(N)$ gauge
group and to evaluate $\gamma$ traces. The aim of these notes is to give a self-contained,
proved account of the basic results with particular emphasis on color reductions. We fine tune existing algorithms to make calculations at high loop orders possible.
\end{abstract}
\maketitle

\section{Introduction}
Yang-Mills quantum field theories (QFTs) feature combinatorial factors in the
Feynman integrals which come from the $SU(N)$ color gauge group. The calculation of
these color factors is explained in \cite{Cvit} (see also \cite{Bondi,Haber}).

In QFTs with fermions, a first step in the calculation of a
Feynman integral often is the evaluation of the traces over the $\gamma$ matrices which
originate from fermion edges and from vertices. The evaluation of these
$\gamma$ traces is explained in \cite{Kennedy,Vasilev} (and in any textbook on QFT, see e.g.\ \cite{IZ}).

Today, one typically calculates to loop orders $\leq5$, where any implementation
of these reductions is sufficient, see e.g.\ \cite{FormColor,Form}.
With the method of graphical functions \cite{gf,gfe,numfunct,7loops,5lphi3} it may become
possible to tackle higher loop orders in certain setups.
At loop orders $\geq6$ it becomes increasingly desirable to fine tune
the algorithms for color and $\gamma$ reductions.

In these notes, we collect identities that are necessary to perform the reductions
to high loop orders. We include the proofs of all relevant results.
Particular emphasis is on color reductions, where the proof of the essential identity
(Proposition \ref{prop:color}) is not included in \cite{Cvit} (see \cite{Haber} for the proof).
We also prove some additional results about color reductions.

The suggested algorithms are implemented in the Maple package {\tt HyperlogProcedures}.
Typical color reductions are more or less instant
at relevant loop orders. The reduction of $\gamma$ traces is slightly more time-consuming, mostly because it can produce lengthy results at high loop orders.
For example, using {\tt HyperlogProcedures} on a single core of an office PC,
the average time for a $\gamma$ reduction of a Feynman graph that contributes to the photon propagator
is approximately 2 minutes at six loops and 30 minutes at seven loops.

\section*{Acknowlegements}
The author is supported by the DFG-grant SCHN 1240/3-1. He thanks Sven-Olaf Moch
for discussions and encouragement. The author also thanks Simon Theil for generating the figures in this work.

\section{Color reduction}
\subsection{The color graph}
We follow the algorithm presented in \cite{Cvit}, which is particularly simple in the case of the group SU($N$).
The complex Lie-algebra of SU($N$) is $\slN$, the Lie-algebra of traceless matrices.
For any representation of $\slN$, the commutator of the basis $T^i=(T^i)_{ab}$ is\footnote{In \cite{Cvit} $f_{ijk}=\ii C_{ijk}$. We do not see the necessity to
pass to complex numbers. Euclidean QED and Yang-Mills theory can be formulated in a real setup.}

\begin{equation}\label{color1}
[T^i,T^j]=f_{ijk}T^k=\sum_{k=1}^{N^2-1}f_{ijk}T^k,
\end{equation}
where we sum over repeated indices (Einstein's sum convention).
The structure constants $f_{ijk}=(f_i)_{kj}$ define an adjoint representation of $\slN$.

By definition, the trace of $T^i$ vanishes,
\begin{equation}\label{color2}
T^i_{aa}\equiv\sum_{a=1}^NT^i_{aa}=0.
\end{equation}
In some sense, we consider color factors as objects in a combinatorial differential geometry with Euclidean metric.

The matrices $T^i$ are chosen orthogonal and normalized,
\begin{equation}\label{color3}
\Tr T^iT^j=\delta_{i,j}.
\end{equation}
We use orthogonality to express the structure constants $f_{ijk}$ in terms of the matrices $T^i$,
\begin{equation}\label{color4}
f_{ijk}=\Tr [T^i,T^j]T^k.
\end{equation}
The cyclicity of the trace implies that the $f_{ijk}$ are fully anti-symmetric in their indices,
\begin{equation}\label{color5}
f_{ijk}=f_{jki}=f_{kij}=-f_{ikj}=-f_{kji}=-f_{jik}.
\end{equation}
The first two identities of (\ref{color5}) allow us to use a graphical representation.

\begin{figure}
\begin{tikzpicture}
\begin{scope}
	\coordinate (AA) at (-1,0);
	\coordinate (XX) at (0,0);
	\coordinate (BB) at (1,0);
	\coordinate (II) at (0,1);
	\draw[fill = black] (XX) circle (2 pt);
	\draw[fun](AA) -- (XX);
	\draw[fun](XX) -- (BB);
	\draw[ad](XX) -- (II);
	\node[above=.01 of AA,scale=1.1] {\textit{a}};
	\node[above=.01 of BB,scale=1.1] {\textit{b}};
	\node[right=.01 of II,scale=1.1] {\textit{i}};
\end{scope}

\begin{scope}[xshift=100]
	\coordinate (II) at (-1,0);
	\coordinate (JJ) at (1,0);
	\coordinate (KK) at (0,1.732);
	\coordinate (XX) at (0,0.577);
	\draw[fill = black] (XX) circle (1.5pt);
	\draw[ad](II) -- (XX);
	\draw[ad](JJ) -- (XX);
	\draw[ad](KK) -- (XX);
	\node[above=.01 of II,scale=1.1] {\textit{j}};
	\node[above=.01 of JJ,scale=1.1] {\textit{k}};
	\node[right=.01 of KK,scale=1.1] {\textit{i}};
\end{scope}
	
\begin{scope}[xshift=200]
	\coordinate (II) at (-1,0);
	\coordinate (JJ) at (1,0);
	\draw[ad](II) -- (JJ);
	\node[above=.01 of II,scale=1.1] {\textit{i}};
	\node[above=.01 of JJ,scale=1.1] {\textit{j}};
\end{scope}
	
\begin{scope}[xshift=300]
	\coordinate (AA) at (-1,0);
	\coordinate (BB) at (1,0);
	\draw[fun](AA) -- (BB);
	\node[above=.01 of AA,scale=1.1] {\textit{a}};
	\node[above=.01 of BB,scale=1.1] {\textit{b}};
\end{scope}
	
	\begin{scope}[yshift=-18]
	\begin{scope}
		\node[scale=1.1]{\large $T^i_{ab}$};
	\end{scope}
	\begin{scope}[xshift=100]
		\node[scale=1.1]{\large $f_{ijk}$};
	\end{scope}
	\begin{scope}[xshift=200]
		\node[scale=1.1]{\large $\delta_{i,j}$};
	\end{scope}
	\begin{scope}[xshift=300]
		\node[scale=1.1]{\large $\delta_{a,b}$};
	\end{scope}
	\end{scope}
	
\end{tikzpicture}
\caption{Feynman rules for color graphs}
\label{figcolor1}
\end{figure}
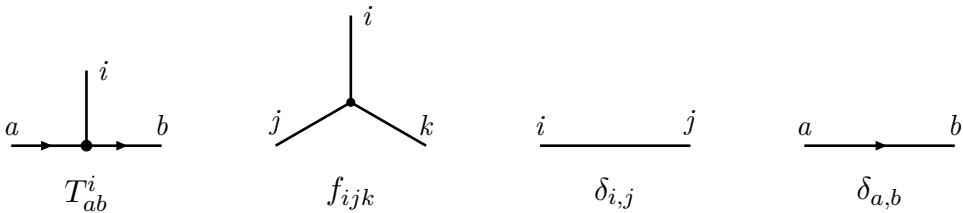

To $T^i_{ab}$ we associate a corolla (a vertex with three half-edges). The half-edge $i$ has no orientation, while the
half-edges $a$ and $b$ are ingoing and outgoing, respectively (see Figure \ref{figcolor1}). 

To $f_{ijk}$ we associate
a corolla of three non-oriented half-edges, where we fix a planar representation with counter-clockwise $i,j,k$.
The sum over double indices glues the corresponding half-edges. Note that matrix products in the $T^i$ preserve the orientation.
We obtain a graph with fixed planar embedding: a ribbon graph.
By anti-symmetry, flipping two edges in the vertex $f_{ijk}$ gives a minus sign, see Figure \ref{figcolor2}.
So, the planar embedding determines the sign of the color factor.

\begin{figure}[ht]
	\begin{tikzpicture}[scale=1]
		\begin{scope}
		\coordinate (II) at (-1,0);
		\coordinate (JJ) at (1,0);
		\coordinate (KK) at (0,1.732);
		\coordinate (XX) at (0,0.577);
		\draw[fill = black] (XX) circle (1.5pt);
		\draw[ad](II) -- (XX);
		\draw[ad](JJ) -- (XX);
		\draw[ad](KK) -- (XX);
		\node [above=.01 of II,scale=1.1] {\textit{j}};
		\node [above=.01 of JJ,scale=1.1] {\textit{k}};
		\node [right=.01 of KK,scale=1.1] {\textit{i}};
		\end{scope}
		
		\begin{scope}[xshift=100]
		\coordinate (II) at (-1,0);
		\coordinate (JJ) at (1,0);
		\coordinate (KK) at (0,1.732);
		\coordinate (XX) at (0,0.577);
		\coordinate (A1) at (.9,0.45);
		\coordinate (A2) at (-.9,0.45);
		\draw [ad] (XX) .. controls (A1) .. (II);
		\draw [white,line width=5pt] (XX) .. controls (A2) .. (JJ);
		\draw[fill = black] (XX) circle (1.5pt);
		\draw[ad](KK) -- (XX);
		\draw [ad] (XX) .. controls (A2) .. (JJ);
		\node [above=.01 of II,scale=1.1] {\textit{j}};
		\node [above=.01 of JJ,scale=1.1] {\textit{k}};
		\node [right=.01 of KK,scale=1.1] {\textit{i}};
		\node [left=1.1 of XX,scale=1.1] {\text{$=\;-$}};
		\end{scope}
		
	\end{tikzpicture}
	\caption{Flipping two edges at an adjoint vertex gives a minus sign.}
	\label{figcolor2}
\end{figure}
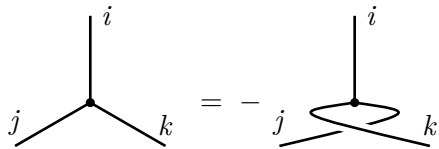

The graphical representations of (\ref{color1}) to (\ref{color5}) are in Figure \ref{figcolor3}, where we swapped the sides of (\ref{color1}).
To simplify the graphical notation it is customary to drop the labels of external half-edges using the convention that half-edges located at the
same position have equal labels.

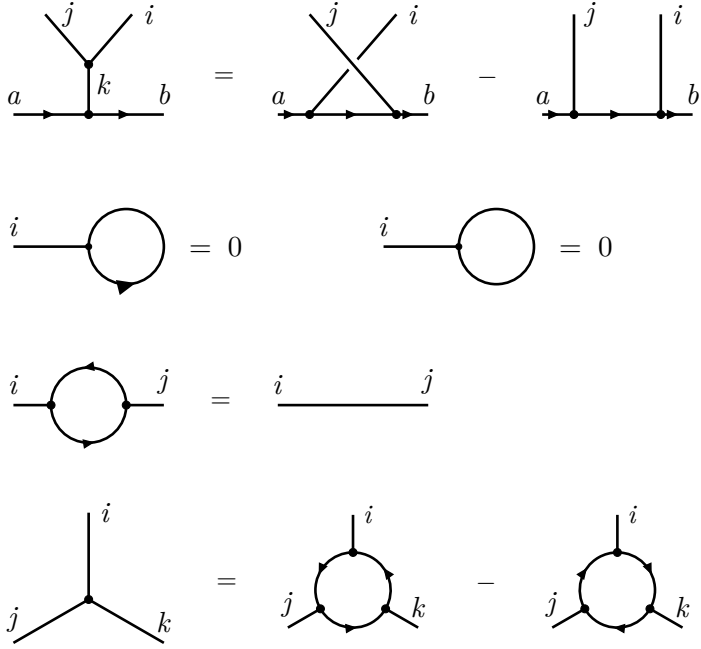
\begin{figure}[ht]
	\begin{tikzpicture}[baseline=(current bounding box.center)]
	\begin{scope}[local bounding box=vertex]
	\coordinate (AA) at (-1,0);
	\coordinate (BB) at (1,0);
	\coordinate (XX) at (0,0);
	\coordinate (YY) at (0,.666);
	\coordinate (JJ) at (-.577,1.333);
	\coordinate (II) at (.577,1.333);
	\draw[fill = black] (XX) circle (1.5pt);
	\draw[fill = black] (YY) circle (1.5pt);
	\draw[fun](AA) -- (XX);
	\draw[fun](XX) -- (BB);
	\draw[ad](XX) -- (YY);
	\draw[ad](JJ) -- (YY);
	\draw[ad](II) -- (YY);
	\node [above=.01 of AA,scale=1.1] {\textit{a}};
	\node [above=.01 of BB,scale=1.1] {\textit{b}};
	\node [scale=1.1] at (0.2,.4) {\textit{k}};
	\node [right=.1 of JJ,scale=1.1] {\textit{j}};
	\node [right=.01 of II,scale=1.1] {\textit{i}};
	\end{scope}
	
	\node at (1.8,0.5) {$=$};
	
	\begin{scope}[xshift=100,local bounding box=cross]
	\coordinate (AA) at (-1,0);
	\coordinate (BB) at (,0);
	\coordinate (XX) at (-.577,0);
	\coordinate (YY) at (.577,0);
	\coordinate (JJ) at (-.577,1.333);
	\coordinate (II) at (.577,1.333);
	\draw[fun](AA) -- (XX);
	\draw[ad](XX) -- (II);
	\draw [white,line width=5pt] (YY) -- (JJ);
	\draw[ad](YY) -- (JJ);
	\draw[fill = black] (XX) circle (1.5pt);
	\draw[fill = black] (YY) circle (1.5pt);
	\draw[fun](XX) -- (YY);
	\draw[fun](YY) -- (BB);
	\node [above=.01 of AA,scale=1.1] {\textit{a}};
	\node [above=.01 of BB,scale=1.1] {\textit{b}};
	\node [right=.1 of JJ,scale=1.1] {\textit{j}};
	\node [right=.01 of II,scale=1.1] {\textit{i}};	\end{scope}
	
	\node at (5.3,0.5) {$-$};
	
	\begin{scope}[xshift=200,local bounding box=parallel]
	\coordinate (AA) at (-1,0);
	\coordinate (BB) at (,0);
	\coordinate (XX) at (-.577,0);
	\coordinate (YY) at (.577,0);
	\coordinate (JJ) at (-.577,1.333);
	\coordinate (II) at (.577,1.333);
	\draw[fill = black] (XX) circle (1.5pt);
	\draw[fill = black] (YY) circle (1.5pt);
	\draw[fun](AA) -- (XX);
	\draw[fun](XX) -- (YY);
	\draw[fun](YY) -- (BB);
	\draw[ad](XX) -- (JJ);
	\draw[ad](YY) -- (II);
	\node [above=.01 of AA,scale=1.1] {\textit{a}};
	\node [above=.01 of BB,scale=1.1] {\textit{b}};
	\node [right=.01 of JJ,scale=1.1] {\textit{j}};
	\node [right=.01 of II,scale=1.1] {\textit{i}};
	\end{scope}
	
	\begin{scope}[local bounding box = funpole,
				  decoration={	markings,
						mark=at position .79 with {\arrow[scale=.8,>=triangle 45]{>}}
							},
				  yshift=-50]
	\coordinate (II) at (-1,0);
	\coordinate (CENTER) at (.5,0);
	\coordinate (XX) at (0,0);
	\draw[line width=1pt,postaction={decorate}] (CENTER) circle (.5 cm);
	\draw[ad](II) -- (XX);
	\draw[fill = black] (XX) circle (0.04 cm);
	\node [above=.01 of II,scale=1.1]{\textit{i}};
	\node [right=1.2 of XX,scale=1.1]{$=\;0$};
	\end{scope}

	\begin{scope}[xshift=140,yshift=-50]
	\coordinate (II) at (-1,0);
	\coordinate (CENTER) at (.5,0);
	\coordinate (XX) at (0,0);
	\draw[line width=1pt,postaction={decorate}] (CENTER) circle (.5 cm);
	\draw[ad](II) -- (XX);
	\draw[fill = black] (XX) circle (0.04 cm);
	\node [above=.01 of II,scale=1.1]{\textit{i}};
	\node [right=1.2 of XX,scale=1.1]{$=\;0$};
	\end{scope}
	
	\begin{scope}[yshift=-110, local bounding box=loop1]
	\coordinate (II) at (-1,0);
	\coordinate (JJ) at (1,0);
	\coordinate (XX) at (-.5,0);
	\coordinate (YY) at (.5,0);
	\draw[fun] (YY) arc (0:180:.5);
	\draw[fun] (XX) arc (180:360:.5);
	\draw[ad](II) -- (XX);
	\draw[ad](YY) -- (JJ);
	\draw[fill = black] (XX) circle (1.5pt);
	\draw[fill = black] (YY) circle (1.5pt);
	\node [above=.01 of II,scale=1.1] {\textit{i}};
	\node [above=.01 of JJ,scale=1.1] {\textit{j}};
	\end{scope}
	
	\begin{scope}[yshift=-110,xshift=100,local bounding box=line]
	\coordinate (II) at (-1,0);
	\coordinate (JJ) at (1,0);
	\draw[ad](II) -- (JJ);
	\node [above=.01 of II,scale=1.1] {\textit{i}};
	\node [above=.01 of JJ,scale=1.1] {\textit{j}};
	\end{scope}
	
	\path(loop1) -- (loop1-|line.west)  node[midway]{$=$};
	
	\begin{scope}[yshift=-200,local bounding box=vertex]
	\coordinate (JJ) at (-1,0);
	\coordinate (KK) at (1,0);
	\coordinate (II) at (0,1.732);
	\coordinate (XX) at (0,0.577);
	\draw[fill = black] (XX) circle (1.5pt);
	\draw[ad](JJ) -- (XX);
	\draw[ad](KK) -- (XX);
	\draw[ad](II) -- (XX);
	\node [above=.01 of JJ,scale=1.1] {\textit{j}};
	\node [above=.01 of KK,scale=1.1] {\textit{k}};
	\node [right=.01 of II,scale=1.1] {\textit{i}};
	\end{scope}
	
	\begin{scope}[xshift=100,yshift=-180,local bounding box=cclock]
	\coordinate (II) at (0,1);
	\coordinate (JJ) at (-.866,-.5);
	\coordinate (KK) at (.866,-.5);
	\coordinate (XX) at (0,.5);
	\coordinate (YY) at (-.433,-.25);
	\coordinate (ZZ) at (.433,-.25);
	\draw[fun] (XX) arc (90:210:.5);
	\draw[fun] (YY) arc (210:330:.5);
	\draw[fun] (ZZ) arc (-30:90:.5);
	\draw[ad](II) -- (XX);
	\draw[ad](JJ) -- (YY);
	\draw[ad](KK) -- (ZZ);
	\draw[fill = black] (XX) circle (1.5pt);
	\draw[fill = black] (YY) circle (1.5pt);
	\draw[fill = black] (ZZ) circle (1.5pt);
	\node [right=.01 of II,scale=1.] {\textit{i}};
	\node [above=.01 of JJ,scale=1.1] {\textit{j}};
	\node [above=.01 of KK,scale=1.1] {\textit{k}};
	\end{scope}
	
	\begin{scope}[xshift=200,yshift=-180,local bounding box =clock]
	\coordinate (II) at (0,1);
	\coordinate (JJ) at (-.866,-.5);
	\coordinate (KK) at (.866,-.5);
	\coordinate (XX) at (0,.5);
	\coordinate (YY) at (-.433,-.25);
	\coordinate (ZZ) at (.433,-.25);
	\draw[fun] (XX) arc (90:-30:.5);
	\draw[fun] (ZZ) arc (-30:-150:.5);
	\draw[fun] (YY) arc (-150:-270:.5);
	\draw[ad](II) -- (XX);
	\draw[ad](JJ) -- (YY);
	\draw[ad](KK) -- (ZZ);
	\draw[fill = black] (XX) circle (1.5pt);
	\draw[fill = black] (YY) circle (1.5pt);
	\draw[fill = black] (ZZ) circle (1.5pt);
	\node [right=.01 of II,scale=1.] {\textit{i}};
	\node [above=.01 of JJ,scale=1.1] {\textit{j}};
	\node [above=.01 of KK,scale=1.1] {\textit{k}};
	\end{scope}

	\path(vertex) -- (vertex-|cclock.west)  node[midway,below]{$=$};
	\path(cclock) -- (cclock-|clock.west)  node[midway,below]{$-$};

	\end{tikzpicture}
	\caption{Basic identities for color graphs.}
	\label{figcolor3}
\end{figure}

\begin{lem}\label{colorlem1}
An orthonormal basis of the fundamental (defining) representation of $sl(N)$ are the $N\times N$ matrices
\begin{align}\label{color6}
T^{\alpha\beta}&=\frac1{\sqrt2}(E_{\alpha\beta}+E_{\beta\alpha}),\qquad\tilde T^{\alpha\beta}=\frac\ii{\sqrt2}(E_{\alpha\beta}-E_{\beta\alpha})\quad\text{for }1\leq\alpha<\beta\leq N,\nonumber\\
T^k&=\frac1{\sqrt{k(k+1)}}\bigg(\Big(\sum_{r=1}^kE_{rr}\Big)-kE_{(k+1)(k+1)}\bigg)\quad\text{for }k=1,\ldots,N-1,
\end{align}
where $(E_{ab})_{cd}=\delta_{a,c}\delta_{b,d}$ are the elementary matrices.
We fix any sequence of $\alpha\beta$ in $T^{\alpha\beta}$ and $\tilde T^{\alpha\beta}$ to continue the labels $1,\ldots,N-1$ of $T^k$ to $N-1+2N(N-1)/2=N^2-1$.
\end{lem}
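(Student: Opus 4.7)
The proof is a direct verification. The plan is to check tracelessness, linear independence (by orthonormality), and the correct dimension count, which together imply the matrices form a basis of $\sL(N)$.

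First I would record the single computational fact underlying everything: $E_{ab}E_{cd}=\delta_{b,c}E_{ad}$ and hence $\Tr(E_{ab}E_{cd})=\delta_{a,d}\delta_{b,c}$. Tracelessness is then immediate. For $T^{\alpha\beta}$ and $\tilde T^{\alpha\beta}$ with $\alpha\neq\beta$ the matrices $E_{\alpha\beta}$ and $E_{\beta\alpha}$ are off-diagonal, so their trace is $0$. For the Cartan generators $T^k$ the trace is $k\cdot 1 + (-k) = 0$ by inspection.

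Next I would verify the orthonormality relation $\Tr T^i T^j = \delta_{i,j}$ case by case. For two symmetric generators one computes
\begin{equation*}
\Tr T^{\alpha\beta}T^{\gamma\delta}=\tfrac12\bigl(\delta_{\alpha,\gamma}\delta_{\beta,\delta}+\delta_{\alpha,\delta}\delta_{\beta,\gamma}\bigr)\cdot 2,
\end{equation*}
which, under the ordering $\alpha<\beta$ and $\gamma<\delta$, equals $\delta_{(\alpha\beta),(\gamma\delta)}$. The $\tilde T^{\alpha\beta}\tilde T^{\gamma\delta}$ pairing is identical up to the extra factor $\ii^2=-1$ compensated by the minus sign between the two terms, and the mixed pairing $\Tr T^{\alpha\beta}\tilde T^{\gamma\delta}$ vanishes because the trace expression is a symmetric sum multiplied by $\ii$ that cancels pairwise. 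Any pairing of an off-diagonal generator with a diagonal $T^k$ vanishes since a diagonal matrix times an off-diagonal one is off-diagonal and traceless. Finally, for two diagonal generators, writing $D^k=\sum_{r=1}^k E_{rr}-kE_{(k+1)(k+1)}$ and assuming $k\leq l$, direct diagonal multiplication gives $\Tr D^k D^l = k - k = 0$ when $k<l$ and $\Tr D^k D^k = k + k^2 = k(k+1)$, so the prefactor $1/\sqrt{k(k+1)}$ in (\ref{color6}) normalises the result.

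Orthonormality (with respect to the symmetric bilinear form $\Tr AB$) immediately implies linear independence: if $\sum_i c_iT^i=0$ then taking $\Tr(T^j\,\cdot\,)$ yields $c_j=0$. The counting $\tfrac{N(N-1)}{2}+\tfrac{N(N-1)}{2}+(N-1)=N^2-1=\dim\sL(N)$ then forces the set to be a basis. The main (and only) care point is verifying orthogonality of the Cartan generators $T^k$ and $T^l$ for $k\neq l$, which is a small index bookkeeping argument; everything else is mechanical.
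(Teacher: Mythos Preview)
Your proof is correct and follows the same direct-verification approach as the paper, computing all trace pairings via $\Tr(E_{ab}E_{cd})=\delta_{a,d}\delta_{b,c}$. In fact you are more thorough: the paper's proof declares the off-diagonal orthonormality ``clear'', checks the diagonal--off-diagonal orthogonality, and verifies only $\Tr(T^k)^2=1$, whereas you also supply the missing case $\Tr T^kT^l=0$ for $k\neq l$, the tracelessness, and the linear-independence/dimension-count argument that upgrades an orthonormal set to a basis.
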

\begin{proof}
The orhonormality of $T^{\alpha\beta}$ and $\tilde T^{\alpha\beta}$ is clear.
Because for $\alpha\neq\beta$ and any $\gamma$ we have (no sum) $\Tr E_{\gamma\gamma}E_{\alpha\beta}=\delta_{\alpha,\gamma}\Tr E_{\alpha\beta}=0$,
the $T^{\alpha\beta}$ and $\tilde T^{\alpha\beta}$ are
orthogonal to the $T^k$. From
$$
(T^k)^2=\frac1{k(k+1)}\bigg(\Big(\sum_{r=1}^kE_{rr}\Big)+k^2E_{(k+1)(k+1)}\bigg)
$$
we obtain $\Tr (T^k)^2=1$.
\end{proof}
\begin{defn}
Let $G$ be a ribbon graph with oriented and non-oriented edges.
We assume that $G$ has two types of vertices: a (fundamental) vertex with two oriented edges and one non-oriented edge
and an (adjoint) vertex with three non-oriented edges.
We write $e_G$, $v_G$, $h_G$, $c_G$ for the number of edges, the number of vertices,
the number of independent cycles (loops), and the number of components of $G$, respectively.
The empty (self-)loop $\circ$ (top graphs in Figure \ref{figcolor5}) has
$e_\circ=v_\circ=0$ and $h_\circ=c_\circ=1$.

The graph $G$ can have external (non-paired) half-edges (hairs).
Equivalently, we connect external vertices to the external
half-edges and obtain a decomposition of the vertices into external (one-valent) and internal (three-valent) vertices, $v_G=v_G^{\mathrm{ext}}+v_G^{\mathrm{int}}$.

After the removal of the non-oriented edges, $G$ decomposes into a collection of $f_G$ oriented cycles.

The reduction of the color graph $G$ is $R_G(N)$.
\end{defn}
From graph homology we get
\begin{equation}\label{graphhom}
h_G-e_G+v_G-c_G=0
\end{equation}
and from counting half-edges we obtain
\begin{equation}\label{halfedges}
v_G^{\mathrm{ext}}=2e_G-3v_G^{\mathrm{int}}.
\end{equation}

\subsection{Fundamental identities}
\begin{figure}[ht]
	\begin{tikzpicture}[scale=.85]
	\begin{scope}[local bounding box=eitch]
	\coordinate (AA) at (-1,0);
	\coordinate (BB) at (1,0);
	\coordinate (CC) at (-1,1.532);
	\coordinate (DD) at (1,1.532);
	\coordinate (XX) at (0,0);
	\coordinate (YY) at (0,1.532);
	\draw[fun](BB) -- (XX);
	\draw[fun](XX) -- (AA);
	\draw[ad](XX) -- (YY);
	\draw[fun](CC) -- (YY);
	\draw[fun](YY) -- (DD);
	\draw[fill = black] (XX) circle (1.5pt);
	\draw[fill = black] (YY) circle (1.5pt);
	\end{scope}

	\begin{scope}[xshift=100,local bounding box=hourglass]
	\coordinate (BB) at (.6,0);
	\coordinate (CC) at (-.6,1.532);
	\draw[fun] (CC) arc (50:-50:1);
	\draw[fun] (BB) arc (230:130:1);
	\end{scope}
	
	\begin{scope}[xshift=220,local bounding box=twolines]
	\coordinate (AA) at (-1,0);
	\coordinate (BB) at (1,0);
	\coordinate (CENT) at (-1,.766);
	\coordinate (CC) at (-1,1.532);
	\coordinate (DD) at (1,1.532);
	\draw[fun](BB) -- (AA);
	\draw[fun](CC) -- (DD);
	\node[left=.2 of CENT,scale=1.1]{\LARGE$\frac1N$};
	\end{scope}
	
	\path(eitch) -- (eitch-|hourglass.west)  node[midway,scale=1.1]{$=$};
	\path(hourglass) -- (hourglass-|twolines.west)  node[midway,scale=1.1]{$-$};
	
	\end{tikzpicture}
	\caption{The two-term relation of $SU(N)$ color reductions, Equation (\ref{color7}).}
	\label{figcolor4}
\end{figure}
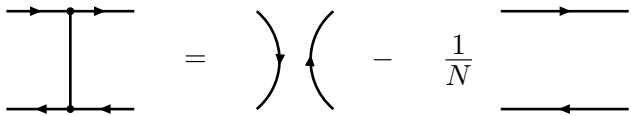
The core identity for color reductions is depicted in Figure \ref{figcolor4}.
This identity is specific to the Lie algebras $\slN$. Similar reduction formulae for other Lie algebras are in \cite{Cvit}.
\begin{prop}\label{prop:color}
In any orthonormal basis $T^i$, we obtain for the sum $\sum_iT^i\otimes T^i$ over all tensor squares of the basis elements,
\begin{equation}\label{color7}
T^i_{ab}T^i_{cd}=\delta_{a,d}\delta_{b,c}-\frac1N\delta_{a,b}\delta_{c,d}.
\end{equation}
\end{prop}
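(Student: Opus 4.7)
The plan is to derive (\ref{color7}) from a universal Fierz-type completeness relation by enlarging the $\mathfrak{sl}(N)$ basis with the properly normalized identity matrix, thereby obtaining an orthonormal basis of all $N\times N$ matrices.

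First I would establish the following general statement: if $\{M^\alpha\}_{\alpha=1}^{N^2}$ is any basis of the space of complex $N\times N$ matrices satisfying $\Tr M^\alpha M^\beta=\delta^{\alpha\beta}$, then
\begin{equation*}
\sum_\alpha (M^\alpha)_{ab}(M^\alpha)_{cd}=\delta_{a,d}\delta_{b,c}.
\end{equation*}
The argument is the standard dual-basis computation: for an arbitrary $N\times N$ matrix $X$, orthonormality with respect to the (non-degenerate, symmetric, bilinear) trace form gives the expansion $X=\sum_\alpha \Tr(M^\alpha X)\,M^\alpha$. Taking the $(i,j)$-entry and writing $\Tr(M^\alpha X)=(M^\alpha)_{kl}X_{lk}$, one reads off $\sum_\alpha(M^\alpha)_{kl}(M^\alpha)_{ij}=\delta_{il}\delta_{jk}$ as a condition forced by arbitrariness of $X$. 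Relabeling indices yields the displayed completeness relation.

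Next I would augment the given orthonormal basis $\{T^i\}_{i=1}^{N^2-1}$ of $\mathfrak{sl}(N)$ by the additional matrix $T^0:=I_N/\sqrt{N}$. A direct check shows $\Tr(T^0)^2=\Tr(I_N)/N=1$, and $\Tr(T^0 T^i)=\Tr(T^i)/\sqrt{N}=0$ by the tracelessness condition (\ref{color2}). Since $\mathfrak{sl}(N)$ has codimension one in the $N^2$-dimensional matrix space and $I_N\notin\mathfrak{sl}(N)$, the collection $\{T^0,T^1,\ldots,T^{N^2-1}\}$ is an orthonormal basis of all $N\times N$ matrices. Applying the Fierz identity of the previous step to this augmented basis and isolating the $T^0$ contribution, which equals $(T^0)_{ab}(T^0)_{cd}=\tfrac{1}{N}\delta_{a,b}\delta_{c,d}$, immediately delivers (\ref{color7}).

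The main obstacle, such as it is, lies in Step 1: one must be comfortable with treating the trace form $\Tr(AB)$ as a genuine \emph{symmetric bilinear} pairing (not a Hermitian inner product), so that "orthonormal" means self-dual under this bilinear form. Once that subtlety is acknowledged, the dual-basis expansion of $X$ is routine, and the remainder of the proof is a single line. A side benefit of this approach is basis-independence: the argument never uses the explicit presentation of Lemma \ref{colorlem1}, and so the identity (\ref{color7}) holds in every orthonormal basis of $\mathfrak{sl}(N)$, exactly as stated in the proposition.
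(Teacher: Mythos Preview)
Your proof is correct and takes a genuinely different route from the paper's own argument. The paper first verifies (\ref{color7}) by brute force in the explicit basis of Lemma~\ref{colorlem1}, splitting the sum into the off-diagonal generators $T^{\alpha\beta},\tilde T^{\alpha\beta}$ and the diagonal $T^k$, and then argues basis-independence by conjugating $T^i\mapsto ST^iS^{-1}$. Your argument instead exploits the Fierz completeness relation for the full matrix algebra $M_N(\CC)$: you adjoin $T^0=I_N/\sqrt{N}$ to obtain an orthonormal basis of all $N\times N$ matrices, invoke the dual-basis identity $\sum_\alpha(M^\alpha)_{ab}(M^\alpha)_{cd}=\delta_{a,d}\delta_{b,c}$, and subtract the single $T^0\otimes T^0$ term. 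This is essentially the short proof the paper attributes to \cite{Haber}. What you gain is a manifestly basis-free derivation with no explicit computation; what the paper's approach gains is that it never leaves the Lie algebra $\slN$ and makes the structure of the cancellations visible, at the cost of a page of index bookkeeping. Your remark that the trace pairing must be read as a symmetric bilinear form (so that orthonormal $\Leftrightarrow$ self-dual) is the right point to flag, and it is all that is needed to make the dual-basis expansion $X=\sum_\alpha\Tr(M^\alpha X)M^\alpha$ rigorous over $\CC$.
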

\begin{proof}
See \cite{Haber} for a short proof which does not use an explicit basis. Here, we use the basis in Lemma \ref{colorlem1} for an explicit calculation.

We first check (\ref{color7}) for the orthonormal basis in Lemma \ref{colorlem1}.
The sum over $\alpha\beta$ in $T^{\alpha\beta}\otimes T^{\alpha\beta}$ and $\tilde T^{\alpha\beta}\otimes\tilde T^{\alpha\beta}$ gives
$$
\frac12\sum_{1\leq\alpha<\beta\leq N}(\delta_{a,\alpha}\delta_{b,\beta}+\delta_{a,\beta}\delta_{b,\alpha})(\delta_{c,\alpha}\delta_{d,\beta}+\delta_{c,\beta}\delta_{d,\alpha})
-(\delta_{a,\alpha}\delta_{b,\beta}-\delta_{a,\beta}\delta_{b,\alpha})(\delta_{c,\alpha}\delta_{d,\beta}-\delta_{c,\beta}\delta_{d,\alpha}).
$$
Only the cross terms survive, yielding
$$
\sum_{1\leq\alpha<\beta\leq N}(\delta_{a,\alpha}\delta_{b,\beta}\delta_{c,\beta}\delta_{d,\alpha}+\delta_{a,\beta}\delta_{b,\alpha}\delta_{c,\alpha}\delta_{d,\beta})
=\delta_{a,d}\delta_{b,c}\sum_{1\leq\alpha<\beta\leq N}(\delta_{a,\alpha}\delta_{b,\beta}+\delta_{b,\alpha}\delta_{a,\beta}).
$$
The first term in the sum gives $1$ for $a<b$ and $0$ for $a\geq b$. The second term is $1$ for $b<a$ and $0$ for $b\geq a$. Adding and subtracting $\delta_{a,b}$ yields
$$
\delta_{a,d}\delta_{b,c}(1-\delta_{a,b}).
$$
We assume $a\leq c$ (otherwise we swap $a,b$ with $c,d$) and obtain for the sum over $T^k\otimes T^k$,
\begin{align*}
&\sum_{k=1}^{N-1}\frac1{k(k+1)}\bigg(\Big(\sum_{r=1}^k\delta_{a,r}\delta_{b,r}\Big)-k\delta_{a,k+1}\delta_{b,k+1}\bigg)\bigg(\Big(\sum_{s=1}^k\delta_{c,s}\delta_{d,s}\Big)-k\delta_{c,k+1}\delta_{d,k+1}\Big)\\
=&\delta_{a,b}\delta_{c,d}\sum_{k=1}^{N-1}\frac1{k(k+1)}\bigg(\Big(\sum_{r,s=1}^k\delta_{a,r}\delta_{c,s}\Big)-k\delta_{c,k+1}\Big(\sum_{r=1}^k\delta_{a,r}\Big)-k\delta_{a,k+1}\Big(\sum_{s=1}^k\delta_{c,s}\Big)+
k^2\delta_{a,k+1}\delta_{c,k+1}\bigg)\\
=&\delta_{a,b}\delta_{c,d}\Big(\sum_{k=c}^{N-1}\frac1{k(k+1)}-\frac1c(1-\delta_{a,c})-0+\delta_{a,c}\frac{c-1}c\Big).
\end{align*}
Because $1/(k(k+1))=1/k-1/(k+1)$, the sum over $k$ yields $1/c-1/N$. Altogether we get
$$
\delta_{a,b}\delta_{c,d}(-\frac1N+\delta_{a,c}).
$$
Adding the sum over $\alpha\beta$ and the sum over $k$, we see that the terms with three Kronecker deltas (implying $a=b=c=d$) cancel and (\ref{color7}) follows.

To get the result for any orthonormal basis we transform $T^i$ to $T'^i=ST^iS^{-1}$ for some invertible $n\times n$ matrix $S$. In components, we get for $\sum_iT'^i\otimes T'^i$
\begin{align*}
T'^i_{ab}T'^i_{cd}&=S_{aa'}T^i_{a'b'}(S^{-1})_{b'b}S_{cc'}T^i_{c'd'}(S^{-1})_{d'd}=S_{aa'}(S^{-1})_{b'b}S_{cc'}(S^{-1})_{d'd}(\delta_{a',d'}\delta_{b',c'}-\frac1N\delta_{a',b'}\delta_{c',d'})\\
&=S_{aa'}(S^{-1})_{a'd}S_{cb'}(S^{-1})_{b'b}-\frac1N S_{aa'}(S^{-1})_{a'b}S_{cc'}(S^{-1})_{c'd})=\delta_{a,d}\delta_{b,c}-\frac1N\delta_{a,b}\delta_{c,d}.
\end{align*}
This completes the proof of the proposition.
\end{proof}
Note that in Figure \ref{figcolor4} the orientation of the external half-edges is preserved. This promotes (\ref{color7}) to a two term relation which is the oriented analog of the classical tree term relation in graph theory (the cross term is forbidden because it is in conflict with the orientation of the edges).

With (\ref{color4}) we can eliminate all adjoint vertices $f_{ijk}$ and with (\ref{color7}) we can eliminate all non-oriented edges between two chains of oriented edges. For any graph with at least one vertex, we finally obtain a result which is a product of
(1) oriented chains with any number of non-oriented external half-edges (products of $T^i$) and
(2) closed oriented loops with any number of non-oriented external half-edges
(traces of products of $T^i$). In the case of a `vacuum' graph with no
external half-edges, we only get a sum of oriented self-loops $\delta_{a,a}=N$. Then, $R_G(N)\in\ZZ[N,N^{-1}]$.

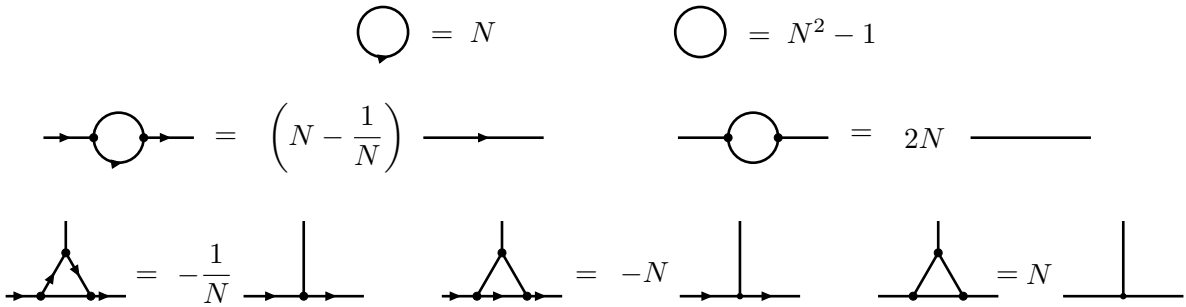
\begin{figure}
	\begin{tikzpicture}
	
	\begin{scope}[xshift=120,
	decoration={	markings,
		mark=at position .79 with {\arrow[scale=.5,>=triangle 45]{>}}
	}
	]
	\coordinate (CENTER) at (0,0);
	\draw[line width=1pt,postaction={decorate}] (CENTER) circle (.333);
	\node [right=.5 of CENTER,scale=1.1]{$=\;N$};
	\end{scope}
	
	\begin{scope}[xshift=240]
	\coordinate (CENTER) at (0,0);
	\draw[ad] (CENTER) circle (.333);
	\node [right=.5 of CENTER,scale=1.1]{$=\;N^2-1$};
	\end{scope}
	
	
		
	\begin{scope}[xshift=20,yshift=-40,local bounding box=loop1]
	\coordinate (II) at (-1,0);
	\coordinate (JJ) at (1,0);
	\coordinate (XX) at (-.333,0);
	\coordinate (YY) at (.333,0);
	\draw[ad] (YY) arc (0:180:.333);
	\draw[fun] (XX) arc (180:360:.333);
	\draw[fun](II) -- (XX);
	\draw[fun](YY) -- (JJ);
	\draw[fill = black] (XX) circle (1.5pt);
	\draw[fill = black] (YY) circle (1.5pt);
	\end{scope}
	
	\begin{scope}[xshift=158,yshift=-40,local bounding box=line1]
	\coordinate (AA) at (-.8,0);
	\coordinate (BB) at (.8,0);
	\draw[fun](AA) -- (BB);
	\node[left=1.7 of BB,scale=1.1]{$\left(N-\displaystyle\frac1N\right)$};
	\end{scope}
	
	\path(loop1) -- (loop1-|line1.west)  node[midway,scale=1.1]{$=$};
	
	
	

	\begin{scope}[xshift=260,yshift=-40,local bounding box=loop3]
	\coordinate (II) at (-1,0);
	\coordinate (JJ) at (1,0);
	\coordinate (XX) at (-.333,0);
	\coordinate (YY) at (.333,0);
	\draw[ad] (YY) arc (0:180:.333);
	\draw[ad] (XX) arc (180:360:.333);
	\draw[ad](II) -- (XX);
	\draw[ad](YY) -- (JJ);
	\draw[fill = black] (XX) circle (1.5pt);
	\draw[fill = black] (YY) circle (1.5pt);
	\end{scope}
	
	\begin{scope}[yshift=-40,xshift=365,local bounding box = line3]
	\coordinate (II) at (-.8,0);
	\coordinate (JJ) at (.8,0);
	\draw[ad](II) -- (JJ);
	\node[left=.2 of II,scale=1.1]{$2N$};
	\end{scope}
	
	\path(loop3) -- (loop3-|line3.west)  node[midway,below=-.25,scale=1.1]{$=$};
		
	\begin{scope}[yshift=-100,local bounding box=triangle1]
	\coordinate (AA) at (-.8,0);
	\coordinate (BB) at (.8,0);
	\coordinate (CC) at (0,1);
	\coordinate (XX) at (-.333,0);
	\coordinate (YY) at (.333,0);
	\coordinate (ZZ) at (0,.577);
	\draw[fun](AA) -- (XX);
	\draw[ad](XX) -- (YY);
	\draw[fun](YY) -- (BB);
	\draw[fun](XX) -- (ZZ);
	\draw[fun](ZZ) -- (YY);
	\draw[ad](ZZ) -- (CC);
	\draw[fill = black] (XX) circle (1.5pt);
	\draw[fill = black] (YY) circle (1.5pt);
	\draw[fill = black] (ZZ) circle (1.5pt);
	\end{scope}
	
	\begin{scope}[yshift=-100,xshift=90,local bounding box= fork1]
	\coordinate (AA) at (-.8,0);
	\coordinate (BB) at (.8,0);
	\coordinate (XX) at (0,0);
	\coordinate (CC) at (0,1);
	\coordinate (MID) at (-.8,.3);
	\draw[fun](AA) -- (XX);
	\draw[fun](XX) -- (BB);
	\draw[ad](CC) -- (XX);
	\draw[fill = black] (XX) circle (1.5pt);
	\node[left=.0 of MID,scale=1.1]{$-\displaystyle\frac1N$};
	\end{scope}
	
	\path(triangle1) -- (triangle1-|fork1.west)  node[midway,below,scale=1.1]{$=$};
	
	\begin{scope}[yshift=-100,xshift=165,local bounding box=triangle2]
	\coordinate (AA) at (-.8,0);
	\coordinate (BB) at (.8,0);
	\coordinate (CC) at (0,1);
	\coordinate (XX) at (-.333,0);
	\coordinate (YY) at (.333,0);
	\coordinate (ZZ) at (0,.577);
	\draw[fun](AA) -- (XX);
	\draw[fun](XX) -- (YY);
	\draw[fun](YY) -- (BB);
	\draw[ad](XX) -- (ZZ);
	\draw[ad](ZZ) -- (YY);
	\draw[ad](ZZ) -- (CC);
	\draw[fill = black] (XX) circle (1.5pt);
	\draw[fill = black] (YY) circle (1.5pt);
	\draw[fill = black] (ZZ) circle (1.5pt);
	\end{scope}
	
	\begin{scope}[yshift=-100,xshift=255,local bounding box= fork2]
	\coordinate (AA) at (-.8,0);
	\coordinate (BB) at (.8,0);
	\coordinate (XX) at (0,0);
	\coordinate (CC) at (0,1);
	\coordinate (MID) at (-.8,.3);
	\draw[fun](AA) -- (XX);
	\draw[fun](XX) -- (BB);
	\draw[ad](CC) -- (XX);
	\draw[fill = black] (XX) circle (1pt);
	\node[left=.0 of MID,scale=1.1]{$-N$};
	\end{scope}
	
	\path(triangle2) -- (triangle2-|fork2.west)  node[midway,below,scale=1.1]{$=$};
	
	\begin{scope}[xshift=330,yshift=-100,local bounding box=triangle3]
	\coordinate (AA) at (-.8,0);
	\coordinate (BB) at (.8,0);
	\coordinate (CC) at (0,1);
	\coordinate (XX) at (-.333,0);
	\coordinate (YY) at (.333,0);
	\coordinate (ZZ) at (0,.577);
	\draw[ad](AA) -- (XX);
	\draw[ad](XX) -- (YY);
	\draw[ad](YY) -- (BB);
	\draw[ad](XX) -- (ZZ);
	\draw[ad](ZZ) -- (YY);
	\draw[ad](ZZ) -- (CC);
	\draw[fill = black] (XX) circle (1.5pt);
	\draw[fill = black] (YY) circle (1.5pt);
	\draw[fill = black] (ZZ) circle (1.5pt);
	\end{scope}
	
	\begin{scope}[yshift=-100,xshift=400,local bounding box= fork3]
	\coordinate (AA) at (-.8,0);
	\coordinate (BB) at (.8,0);
	\coordinate (XX) at (0,0);
	\coordinate (CC) at (0,1);
	\coordinate (MID) at (-.8,.3);
	\draw[ad](AA) -- (XX);
	\draw[ad](XX) -- (BB);
	\draw[ad](CC) -- (XX);
	\draw[fill = black] (XX) circle (1pt);
	\node[left=.0 of MID,scale=1.1]{$N$};
	\end{scope}
	
	\path(triangle3) -- (triangle3-|fork3.west)  node[midway,below,scale=1.1]{$=$};
	
	\end{tikzpicture}
	\caption{Reductions of small cycles (also see Figure \ref{figcolor3}).}
	\label{figcolor5}
\end{figure}

For practical purposes, it is useful to derive formulae for small cycles. We have (see Figures \ref{figcolor3} and \ref{figcolor5})
\begin{align}\label{color8}
\delta_{a,a}=N,&\qquad \delta_{i,i}=N^2-1,\nonumber\\
T^i_{aa}=0,&\qquad f_{ijj}=0,\nonumber\\
T^i_{ab}T^i_{bc}=(N-\frac1N)\delta_{a,c},&\qquad T^i_{ab}T^j_{ba}=\delta_{i,j},\qquad f_{ik\ell}f_{\ell kj}=2N\delta_{i,j},\nonumber\\
T^i_{ab}T^j_{bc}T^i_{cd}=-\frac1N T^j_{ad},&\qquad T^i_{ab}T^j_{bc}f_{ijk}=-NT^k_{ac},\qquad f_{ijk}f_{j\ell m}f_{mnk}=Nf_{i\ell n}.
\end{align}
Oriented cycles with $\geq3$ edges cannot be reduced.

\subsection{The reduction algorithm}
An efficient algorithm for color reduction first searches for the smallest cycle in the graph. If the number of vertices in this cycle (the girth of the graph) is $\leq3$ and the cycle is not
an oriented triangle, then we use (\ref{color8}) for reduction.

If the girth of the graph is $\geq3$ and all triangles are oriented, then we search for a non-oriented edge that connects two oriented chains and use
(\ref{color7}). The graph loses two fundamental vertices. The number of independent cycles decreases by one or the graph disconnects. The color factors of disconnected graphs factorize.

If none of the previous reductions is possible, then we search for the smallest non-oriented cycle (it still may have oriented edges).
If the cycle has an oriented edge, then we use the first identity in Figure \ref{figcolor3} to reduce an adjoint vertex in the cycle. The number of adjoint vertices in the cycle
decreases by one. In one term the cycle also shrinks by an edge.
We go back to the elimination of non-oriented edges between oriented chains.

If all smallest non-oriented cycles have only non-oriented edges, then we use (\ref{color4}) to convert one adjoint vertex in the cycle into a sum of two oriented triangles
(also see Proposition \ref{prop:results}).
The number of vertices in the original cycle (and also $h_G$) increases by one. But now the cycle has an oriented edge and can be reduced.

The algorithm terminates if every graph in the reduction is a union of oriented chains and cycles,
possibly with external non-oriented half-edges.
For a connected vacuum graph with at least one vertex, every term in the reduction
is a collection of oriented self-loops.

The algorithm tries (whenever possible) to avoid producing a large number of terms by eliminating adjoint vertices with (\ref{color4}). It also quickly reduces $h_G$ so
that one obtains a bootstrap algorithm. This is particularly powerful in the case of vacuum graphs (graphs with no external half-edges). For such
graphs, the result is in $\ZZ[N,N^{-1}]$ and can be cached for small $h_G$. Also, for a given $h_G$ the number of vacuum graphs is much smaller
than the number of graphs with external half-edges. This double effect (simple results and few graphs) allows one to cache all vacuum graphs with $h_G\leq11$ which are not amenable
to reductions in (\ref{color8}). This algorithm has been implemented in HyperlogProcedures \cite{Shlog}.

\begin{ex}\label{colorex2}
Nontrivial vacuum graphs with four loops and their reductions are in Figure \ref{figcolor7}.
\end{ex}

\begin{figure}[ht]
	\begin{tikzpicture}
	\begin{scope}
	\coordinate (AA) at (-1,1);
	\coordinate (BB) at (-1,0);
	\coordinate (CC) at (-1,-1);
	\coordinate (XX) at (1,1);
	\coordinate (YY) at (1,0);
	\coordinate (ZZ) at (1,-1);
	\coordinate (MID) at (0,-1);
	\draw[ad](AA) -- (XX);
	\draw[ad](AA) -- (YY);
	\draw[ad](AA) -- (ZZ);
	\draw[white,line width=4pt] (BB) -- (XX);
	\draw[ad](BB) -- (XX);
	\draw[white,line width=4pt] (BB) -- (YY);
	\draw[ad](BB) -- (YY);
	\draw[ad](BB) -- (ZZ);
	\draw[white,line width=4pt] (CC) -- (XX);
	\draw[ad](CC) -- (XX);
	\draw[white,line width=4pt] (CC) -- (YY);
	\draw[ad](CC) -- (YY);
	\draw[ad](CC) -- (ZZ);
	\draw[fill = black] (AA) circle (1.5pt);
	\draw[fill = black] (BB) circle (1.5pt);
	\draw[fill = black] (CC) circle (1.5pt);
	\draw[fill = black] (XX) circle (1.5pt);
	\draw[fill = black] (YY) circle (1.5pt);
	\draw[fill = black] (ZZ) circle (1.5pt);
	\node[below=.2 of MID,scale=1.1]{\Large$0$};
	\end{scope}
	
	\begin{scope}[xshift=150]
	\coordinate (WW) at (0,1);
	\coordinate (XX) at (-1,0);
	\coordinate (YY) at (0,-1);
	\coordinate (ZZ) at (1,0);
	\coordinate (AA) at (-.354,.354);
	\coordinate (BB) at (.354,-.354);
	\coordinate (MID) at (0,-1);
	\draw[fun] (WW) arc (90:180:1);
	\draw[fun] (XX) arc (180:270:1);
	\draw[fun] (YY) arc (-90:0:1);
	\draw[fun] (ZZ) arc (0:90:1);
	\draw[ad](WW) -- (AA);
	\draw[ad](XX) -- (BB);
	\draw [white,line width=4pt] (YY) -- (AA);
	\draw[ad](YY) -- (AA);
	\draw[ad](ZZ) -- (BB);
	\draw[ad](AA) -- (BB);
	\draw[fun] (WW) arc (90:180:1);
	\draw[fun] (XX) arc (180:270:1);
	\draw[fun] (YY) arc (-90:0:1);
	\draw[fun] (ZZ) arc (0:90:1);
	\draw[fill = black] (WW) circle (1.5pt);
	\draw[fill = black] (XX) circle (1.5pt);
	\draw[fill = black] (YY) circle (1.5pt);
	\draw[fill = black] (ZZ) circle (1.5pt);
	\draw[fill = black] (AA) circle (1.5pt);
	\draw[fill = black] (BB) circle (1.5pt);
	\node[below=.2 of MID,scale=1.1]{\Large$0$};
	\end{scope}
	
	\begin{scope}[xshift=300]
	\coordinate (WW) at (0,1);
	\coordinate (XX) at (-1,0);
	\coordinate (YY) at (0,-1);
	\coordinate (ZZ) at (1,0);
	\coordinate (AA) at (-.354,.354);
	\coordinate (BB) at (.354,-.354);
	\coordinate (MID) at (0,-1);
	\draw[ad](WW) -- (AA);
	\draw[ad](XX) -- (BB);
	\draw [white,line width=3mm] (YY) -- (AA);
	\draw[ad](YY) -- (AA);
	\draw[ad](ZZ) -- (BB);
	\draw[ad](AA) -- (BB);
	\draw[fun] (WW) arc (90:0:1);
	\draw[fun] (ZZ) arc (0:-90:1);
	\draw[fun] (YY) arc (-90:-180:1);
	\draw[fun] (XX) arc (-180:-270:1);
	\draw[fill = black] (WW) circle (1.5pt);
	\draw[fill = black] (XX) circle (1.5pt);
	\draw[fill = black] (YY) circle (1.5pt);
	\draw[fill = black] (ZZ) circle (1.5pt);
	\draw[fill = black] (AA) circle (1.5pt);
	\draw[fill = black] (BB) circle (1.5pt);
	\node[below=.2 of MID,scale=1.1]{\Large$0$};
	\end{scope}
	
	\begin{scope}[yshift=-100]
	\coordinate (UU) at (0,1);
	\coordinate (VV) at (-.866,.5);
	\coordinate (WW) at (-.866,-.5);
	\coordinate (XX) at (0,-1);
	\coordinate (YY) at (.866,-.5);
	\coordinate (ZZ) at (.866,.5);
	\coordinate (MID) at (0,-1);
	\draw[ad](UU) -- (XX);
	\draw [white,line width=4pt] (VV) -- (YY);
	\draw[ad](VV) -- (YY);
	\draw [white,line width=4pt] (WW) -- (ZZ);
	\draw[ad](WW) -- (ZZ);
	\draw[fun] (UU) arc (90:150:1);
	\draw[fun] (VV) arc (150:210:1);
	\draw[fun] (WW) arc (210:270:1);
	\draw[fun] (XX) arc (270:330:1);
	\draw[fun] (YY) arc (-30:30:1);
	\draw[fun] (ZZ) arc (30:90:1);
	\draw[fill = black] (UU) circle (1.5pt);
	\draw[fill = black] (VV) circle (1.5pt);
	\draw[fill = black] (WW) circle (1.5pt);
	\draw[fill = black] (XX) circle (1.5pt);
	\draw[fill = black] (YY) circle (1.5pt);
	\draw[fill = black] (ZZ) circle (1.5pt);
	\node[below=.2 of MID,scale=1.1]{$(1+\frac{1}{N^2})(N^2-1)$};
	\end{scope}
	
	\begin{scope}[yshift=-100,xshift=150]
	\coordinate (XX) at (0,1);
	\coordinate (YY) at (-.866,-.5);
	\coordinate (ZZ) at (.866,-.5);
	\coordinate (AA) at (0,.333);
	\coordinate (BB) at (-.289,-.167);
	\coordinate (CC) at (.289,-.167);
	\coordinate (MID) at (0,-1);
	\draw[fun] (XX) arc (90:210:1);
	\draw[fun] (YY) arc (210:330:1);
	\draw[fun] (ZZ) arc (-30:90:1);
	\draw[fun] (AA) arc (90:210:.333);
	\draw[fun] (BB) arc (210:330:.333);
	\draw[fun] (CC) arc (-30:90:.333);
	\draw[ad](XX) -- (AA);
	\draw[ad](YY) -- (BB);
	\draw[ad](ZZ) -- (CC);
	\draw[fill = black] (AA) circle (1.5pt);
	\draw[fill = black] (BB) circle (1.5pt);
	\draw[fill = black] (CC) circle (1.5pt);
	\draw[fill = black] (XX) circle (1.5pt);
	\draw[fill = black] (YY) circle (1.5pt);
	\draw[fill = black] (ZZ) circle (1.5pt);
	\node[below=.2 of MID,scale=1.1]{$-\frac2N(N^2-1)$};
	\end{scope}
	
	\begin{scope}[yshift=-100,xshift=300]
	\coordinate (XX) at (0,1);
	\coordinate (YY) at (-.866,-.5);
	\coordinate (ZZ) at (.866,-.5);
	\coordinate (AA) at (0,.333);
	\coordinate (BB) at (-.289,-.167);
	\coordinate (CC) at (.289,-.167);
	\coordinate (MID) at (0,-1);
	\draw[fun] (XX) arc (90:210:1);
	\draw[fun] (YY) arc (210:330:1);
	\draw[fun] (ZZ) arc (-30:90:1);
	\draw[fun] (AA) arc (90:-30:.333);
	\draw[fun] (CC) arc (-30:-150:.333);
	\draw[fun] (BB) arc (-150:-270:.333);
	\draw[ad](XX) -- (AA);
	\draw[ad](YY) -- (BB);
	\draw[ad](ZZ) -- (CC);
	\draw[fill = black] (AA) circle (1.5pt);
	\draw[fill = black] (BB) circle (1.5pt);
	\draw[fill = black] (CC) circle (1.5pt);
	\draw[fill = black] (XX) circle (1.5pt);
	\draw[fill = black] (YY) circle (1.5pt);
	\draw[fill = black] (ZZ) circle (1.5pt);
	\node[below=.2 of MID,scale=1.1]{$(N-\frac2N)(N^2-1)$};
	\end{scope}
	\end{tikzpicture}
	\caption{Nontrivial color graphs with four loops and their reductions.}
	\label{figcolor7}
\end{figure}
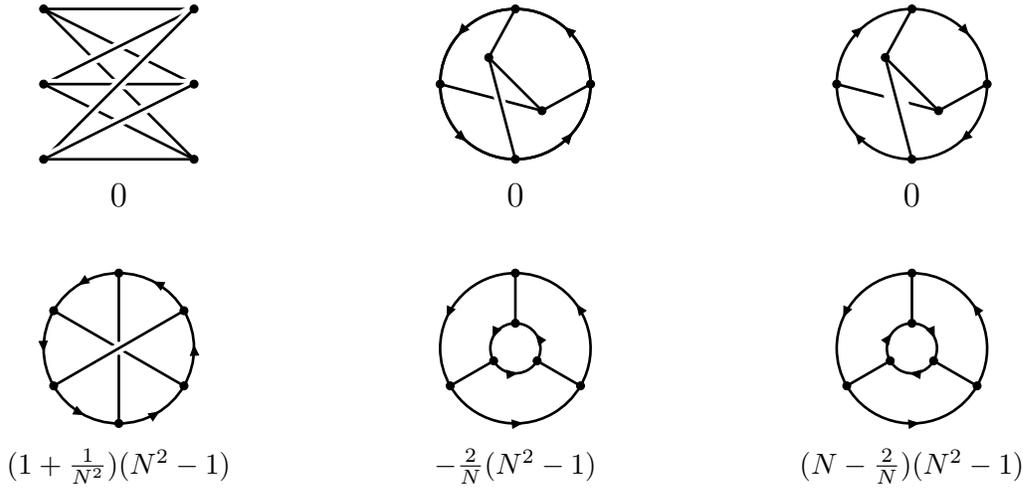

The color reduction of any graph with external half-edges can be represented as a sum of vacuum graphs by completing the graph in various ways,
see Example \ref{colorex3}. This gives a linear system which can easily be solved for a small number of external half-edges. Using completion
for many external half-edges has the drawback that $h_G$ increases and the method becomes less powerful.
In this case, direct reduction is more efficient.

\begin{figure}[ht]
	\begin{tikzpicture}
	\begin{scope}[local bounding box=graph1]
	\coordinate (II) at (-1,0);
	\coordinate (JJ) at (1,0);
	\coordinate (XX) at (-.5,0);
	\coordinate (YY) at (.5,0);
	\draw[line width = 1pt,fill=lightgray] (0,0) circle (.5 cm);
	\draw[ad](II) -- (XX);
	\draw[ad](YY) -- (JJ);
	\draw[fill = black] (XX) circle (1.5pt);
	\draw[fill = black] (YY) circle (1.5pt);
	\end{scope}
	
	\begin{scope}[yshift=-80,local bounding box=graph2]
	\coordinate (II) at (-1,0);
	\coordinate (JJ) at (1,0);
	\coordinate (XX) at (-.5,0);
	\coordinate (YY) at (.5,0);
	\coordinate (A1) at (2,1);
	\coordinate (A2) at (-2,1);
	\coordinate (MID) at (0,-1);
	\draw[line width = 1pt,fill=lightgray] (0,0) circle (.5 cm);
	\draw[ad] (YY) .. controls (A1) and (A2) .. (XX);
	\draw[fill = black] (XX) circle (1.5pt);
	\draw[fill = black] (YY) circle (1.5pt);
	\node[below=-.2 of MID,scale=1.1]{$G_0$};
	\end{scope}
	
	\draw[-latex,thick]([yshift=-5]graph1.south) -- ([yshift=5]graph1.south|-graph2.north);

	\begin{scope}[xshift=95,local bounding box=graph3]
	\coordinate (II) at (-1,0);
	\coordinate (JJ) at (1,0);
	\coordinate (XX) at (-.5,0);
	\coordinate (YY) at (.5,0);
	\draw[line width = 1pt,fill=lightgray] (0,0) circle (.5 cm);
	\draw[fun](II) -- (XX);
	\draw[fun](YY) -- (JJ);
	\draw[fill = black] (XX) circle (1.5pt);
	\draw[fill = black] (YY) circle (1.5pt);
	\end{scope}
	
	\begin{scope}[xshift=95,yshift=-80,local bounding box=graph4]
	\coordinate (II) at (-1,0);
	\coordinate (JJ) at (1,0);
	\coordinate (XX) at (-.5,0);
	\coordinate (YY) at (.5,0);
	\coordinate (A1) at (2,1);
	\coordinate (A2) at (-2,1);
	\coordinate (MID) at (0,-1);
	\draw[line width = 1pt,fill=lightgray] (0,0) circle (.5 cm);
	\draw[fun] (YY) .. controls (A1) and (A2) .. (XX);
	\draw[fill = black] (XX) circle (1.5pt);
	\draw[fill = black] (YY) circle (1.5pt);
	\node[below=-.2 of MID,scale=1.1]{$G_0$};
	\end{scope}
	
	\draw[-latex,thick]([yshift=-5]graph3.south) -- ([yshift=5]graph3.south|-graph4.north);

	\begin{scope}[xshift=190,local bounding box=graph5]
	\coordinate (II) at (-1,0);
	\coordinate (JJ) at (1,0);
	\coordinate (AA) at (0,1);
	\coordinate (XX) at (-.5,0);
	\coordinate (YY) at (.5,0);
	\coordinate (ZZ) at (0,.5);
	\draw[line width = 1pt,fill=lightgray] (0,0) circle (.5 cm);
	\draw[fun](II) -- (XX);
	\draw[fun](YY) -- (JJ);
	\draw[ad](ZZ) -- (AA);
	\draw[fill = black] (XX) circle (1.5pt);
	\draw[fill = black] (YY) circle (1.5pt);
	\draw[fill = black] (ZZ) circle (1.5pt);
	\end{scope}
	
	\begin{scope}[yshift=-80,xshift=190,local bounding box=graph6]
	\coordinate (II) at (-1,0);
	\coordinate (JJ) at (1,0);
	\coordinate (AA) at (0,1);
	\coordinate (XX) at (-.5,0);
	\coordinate (YY) at (.5,0);
	\coordinate (ZZ) at (0,.5);
	\coordinate (A1) at (1.2,0);
	\coordinate (A2) at (1.2,1);
	\coordinate (A3) at (-1.2,0);
	\coordinate (A4) at (-1.2,1);
	\coordinate (MID) at (0,-1);
	\draw[line width = 1pt,fill=lightgray] (0,0) circle (.5 cm);
	\draw[fun] (YY) .. controls (A1) and (A2) .. (AA);
	\draw[fun] (AA) .. controls (A4) and (A3) .. (XX);
	\draw[ad](ZZ) -- (AA);
	\draw[fill = black] (XX) circle (1.5pt);
	\draw[fill = black] (YY) circle (1.5pt);
	\draw[fill = black] (ZZ) circle (1.5pt);
	\node[below=-.2 of MID,scale=1.1]{$G_0$};
	\end{scope}
	
	\draw[-latex,thick]([yshift=-5]graph5.south) -- ([yshift=5]graph5.south|-graph6.north);

	\begin{scope}[xshift=320,local bounding box=graph7]
	\coordinate (II) at (-1,0);
	\coordinate (JJ) at (1,0);
	\coordinate (AA) at (0,1);
	\coordinate (XX) at (-.5,0);
	\coordinate (YY) at (.5,0);
	\coordinate (ZZ) at (0,.5);
	\draw[line width = 1pt,fill=lightgray] (0,0) circle (.5 cm);
	\draw[ad](II) -- (XX);
	\draw[ad](YY) -- (JJ);
	\draw[ad](ZZ) -- (AA);
	\draw[fill = black] (XX) circle (1.5pt);
	\draw[fill = black] (YY) circle (1.5pt);
	\draw[fill = black] (ZZ) circle (1.5pt);
	\end{scope}
	
	\begin{scope}[yshift=-70,xshift=280,local bounding box=graph8]
	\coordinate (II) at (-1,0);
	\coordinate (JJ) at (1,0);
	\coordinate (AA) at (0,1);
	\coordinate (XX) at (-.5,0);
	\coordinate (YY) at (.5,0);
	\coordinate (ZZ) at (0,.5);
	\coordinate (MID) at (0,-1);
	\draw[line width = 1pt,fill=lightgray] (0,0) circle (.5 cm);
	\draw[ad](II) -- (XX);
	\draw[ad](YY) -- (JJ);
	\draw[ad](ZZ) -- (AA);
	\draw[fun](AA) arc (90:180:1);
	\draw[fun](II) arc (-180:0:1);
	\draw[fun](JJ) arc (0:90:1);
	\draw[fill = black] (XX) circle (1.5pt);
	\draw[fill = black] (YY) circle (1.5pt);
	\draw[fill = black] (ZZ) circle (1.5pt);
	\node[below=.1 of MID,scale=1.1]{$G_0^1$};
	\end{scope}
	
	\begin{scope}[yshift=-70,xshift=360,local bounding box=graph9]
	\coordinate (II) at (-1,0);
	\coordinate (JJ) at (1,0);
	\coordinate (AA) at (0,1);
	\coordinate (XX) at (-.5,0);
	\coordinate (YY) at (.5,0);
	\coordinate (ZZ) at (0,.5);
	\coordinate (MID) at (0,-1);
	\draw[line width = 1pt,fill=lightgray] (0,0) circle (.5cm);
	\draw[ad](II) -- (XX);
	\draw[ad](YY) -- (JJ);
	\draw[ad](ZZ) -- (AA);
	\draw[fun](AA) arc (90:0:1);
	\draw[fun](JJ) arc (0:-180:1);
	\draw[fun](II) arc (180:90:1);
	\draw[fill = black] (XX) circle (1.5pt);
	\draw[fill = black] (YY) circle (1.5pt);
	\draw[fill = black] (ZZ) circle (1.5pt);
	\node[below=.1 of MID,scale=1.1]{$G_0^2$};
	\end{scope}
	
	\draw[-latex,thick]([yshift=-5]graph7.south) -- ([yshift=30]intersection of graph7--graph7.south and graph8--graph9);
	\path(graph8) -- (graph8-|graph9.west)  node[midway,below,scale=1.1]{\LARGE ,};
	
	\end{tikzpicture}
	\caption{Reductions of graphs with few external half-edges.}
	\label{figcolor8}
\end{figure}
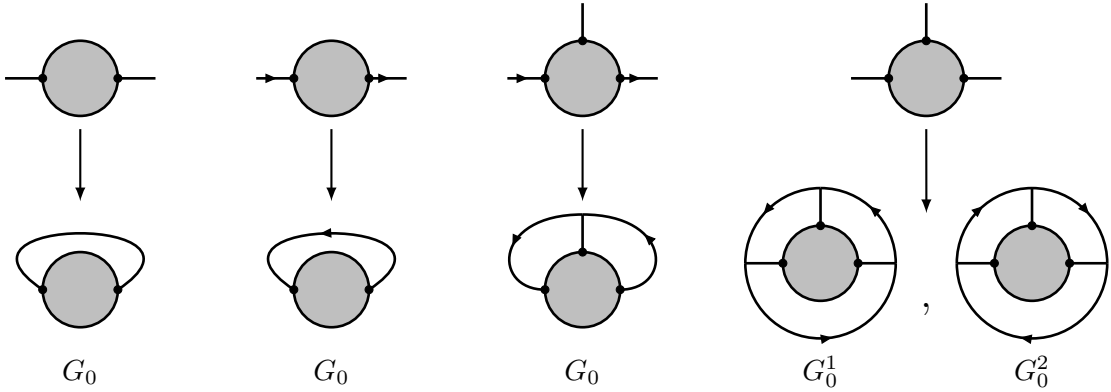

\begin{ex}\label{colorex3}
Consider the following examples for the reduction of a color graph $G$, see Figure \ref{figcolor8}.
\begin{enumerate}
\item If $G$ has two non-oriented external half-edges, the reduction has the form $R_G(N)_{i,j}=r_G(N) \delta_{i,j}$, where $ij$ is the non-oriented edge between the external half-edges.
Closing the edge gives the graph $G_0$ with reduction polynomial $R_{G_0}(N)$. From $\delta_{i,j}^2=N^2-1$ we get
\begin{equation}\label{eqcolorex1}
R_G(N)_{i,j}=\frac{R_{G_0}(N)}{N^2-1}\delta_{i,j}.
\end{equation}
\item If $G$ has two oriented external half-edges, the reduction has the form $R_G(N)_{a,b}=r_G(N)\delta_{a,b}$, where $ab$ is the oriented edge between the external half-edges.
Closing the edge gives the graph $G_0$ with reduction polynomial $R_{G_0}(N)$. From $\delta_{a,b}^2=N$ we get
\begin{equation}\label{eqcolorex2}
R_G(N)_{a,b}=\frac{R_{G_0}(N)}{N}\delta_{a,b}.
\end{equation}
\item If $G$ has two oriented external half-edges and one non-oriented external half-edge, the reduction has the form $R_G(N)^i_{a,b}=r_G(N)T^i_{ab}$.
We construct the vacuum graph $G_0$ by joining all external half edges in a fundamental vertex, i.e.\ we multiply by $T^i_{ba}$ and sum over $i$, $a$, and $b$.
Because $T^i_{ab}T^i_{ba}=N^2-1$ we obtain
\begin{equation}\label{eqcolorex3}
R_G(N)^i_{a,b}=\frac{R_{G_0}(N)}{N^2-1}T^i_{ab}.
\end{equation}
\item If $G$ has three non-oriented external half-edges, the reduction has two terms corresponding to two triangles with opposite orientation,
$$
R_G(N)^{i,j,k}=r_{G^1}(N)T^i_{ab}T^j_{bc}T^k_{ca}+r_{G^2}(N)T^k_{ab}T^j_{bc}T^i_{ca}.
$$
We close in two different ways by contraction with $T^i_{de}T^j_{ef}T^k_{fd}$ and with $T^k_{de}T^j_{ef}T^i_{fd}$, yielding
the vacuum graphs $G_0^1$ and $G_0^2$, respectively.
The transition matrix is given by two oriented triangles that are connected by three non-oriented edges. The orientation of the triangles can be parallel or opposite, yielding
the symmetric $2\times2$ matrix (see the last two graphs in Figure \ref{figcolor7})
$$
\Big(N-\frac1N\Big)\left(\begin{array}{cc}-2&N^2-2\\N^2-2&-2\end{array}\right).
$$
Inverting this matrix yields
\begin{equation}\label{eqcolorex4}
R_G(N)^{i,j,k}=\frac{(2R_{G_0^1}(N)+(N^2-2)R_{G_0^2}(N))T^i_{ab}T^j_{bc}T^k_{ca}+((N^2-2)R_{G_0^1}(N)+2R_{G_0^2}(N))T^k_{ab}T^j_{bc}T^i_{ca}}{N(N^2-1)(N^2-4)}.
\end{equation}
\end{enumerate}
\end{ex}
\begin{figure}[ht]
	\begin{tikzpicture}[scale=1.5]
	\coordinate (AA) at (-.8,-.375);
	\coordinate (BB) at (-.8,.375);
	\coordinate (CC) at (0,.8);
	\coordinate (DD) at (0,0);
	\coordinate (EE) at (0,-.8);
	\coordinate (XX) at (.666,.8);
	\coordinate (YY) at (.666,0);
	\coordinate (ZZ) at (.666,-.8);
	\draw[ad](AA) -- (CC);
	\draw[ad](AA) -- (DD);
	\draw[ad](AA) -- (EE);
	\draw[ad](BB) -- (CC);
	\draw[white,line width=4pt] (BB) -- (DD);
	\draw[ad](BB) -- (DD);
	\draw[white,line width=4pt] (BB) -- (EE);
	\draw[ad](BB) -- (EE);
	\draw[ad](CC) -- (XX);
	\draw[ad](DD) -- (YY);
	\draw[ad](EE) -- (ZZ);
	\draw[fill = black] (AA) circle (1.5pt);
	\draw[fill = black] (BB) circle (1.5pt);
	\draw[fill = black] (CC) circle (1.5pt);
	\draw[fill = black] (DD) circle (1.5pt);
	\draw[fill = black] (EE) circle (1.5pt);
	\end{tikzpicture}
	\caption{A subgraph which leads to zero color reduction.}
	\label{figcolor6}
\end{figure}
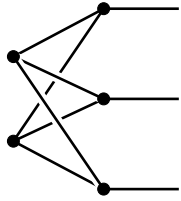

\subsection{Results and a conjecture}
\begin{prop}\label{propresults2}
Let $G$ with $v_G\geq1$ be a connected vacuum color graph for the group SU($N$) and let $R_G(N)$ be its color reduction.
Let $e_G$ and $f_G$ be the number of edges and oriented cycles in $G$, respectively.
\begin{enumerate}
\item $R_G(N)$ is divisible by $N^2-1$.
\item If $G$ has a subgraph as depicted in Figure \ref{figcolor6}, then $R_G(N)=0$.
\item The polynomial $R_G(N)\in\ZZ[N,\frac1N]$ has low degree $\geq -h_G+2$.
If $G$ has a fundamental vertex, then $\deg(R_G(N))\leq h_G$,  otherwise
$\deg(R_G(N))\leq h_G+1$.
\item The polynomial $R_G(N)$ has parity $f_G+e_G$,
\begin{equation}\label{eqparity}
R_G(-N)=(-1)^{f_G+e_G}R_G(N).
\end{equation}
\end{enumerate}
\end{prop}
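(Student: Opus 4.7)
The four items share a common framework: apply (\ref{color4}) to expand each $f$-vertex into a signed pair of oriented fundamental triangles, then apply Proposition \ref{prop:color} to expand each adjoint edge into a swap ($\delta\delta$) or parallel ($-\tfrac{1}{N}\delta\delta$) term. This gives
\[
R_G(N) \;=\; \sum_{t} \epsilon_t\, N^{n_t},\qquad n_t \;=\; f_t - |P_t|,
\]
indexed by orientation and swap/parallel choices, where $f_t$ counts the oriented fundamental cycles in the resolved configuration and $|P_t|$ the number of parallel-chosen edges.

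For (4), under $N \mapsto -N$ each term picks up the sign $(-1)^{n_t}$. A mod-$2$ bookkeeping through the expansion---each triangle adds $1$ to $f$ (total shift $v_a$), each swap flips $f$ by $\pm 1$ (shift $|S|$)---yields $n_t \equiv f_G + v_a + |S| - (e_a - |S|) \equiv f_G + v_a + e_a \pmod 2$. The half-edge identity $v_f + 3 v_a \equiv 0 \pmod 2$ forces $v_a \equiv v_f \equiv e_f \pmod 2$, so $n_t \equiv f_G + e_f + e_a = f_G + e_G \pmod 2$, establishing (4).

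For (1), each adjoint-index sum in the sum-over-assignments formula for $R_G(N)$ ranges over $\{1, \ldots, N^2 - 1\}$, which is empty at $N = 1$. Since $v_G \geq 1$ forces $G$ to contain an adjoint edge, $R_G(1) = 0$, and by (4), $R_G(-1) = 0$ as well, so $N^2 - 1$ divides $R_G(N)$. For (2), label the three $AA$-edges by $(i_1, j_1, k_1)$, the three $BB$-edges by $(i_2, j_2, k_2)$, and the external half-edges at $CC, DD, EE$ by $a, b, c$; the subgraph's color tensor is
\[
T_{abc} \;=\; f_{i_1 j_1 k_1}\, f_{i_2 j_2 k_2}\, f_{i_1 i_2 a}\, f_{j_1 j_2 b}\, f_{k_1 k_2 c}.
\]
Relabeling $1 \leftrightarrow 2$ leaves $T_{abc}$ invariant, while the three mixed factors each flip sign by antisymmetry of $f$; thus $T_{abc} = -T_{abc} = 0$ and $R_G(N) = 0$.

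For (3), I would proceed by induction on $h_G$. When $G$ has an adjoint edge between two fundamental vertices, Proposition \ref{prop:color} gives $R_G = R_{G'_1} - \tfrac{1}{N} R_{G'_2}$ with $h_{G'_i} \leq h_G - 1$ (the disconnecting case factorizes and improves the bound); combined with the inductive hypothesis, this yields the lower bound $\mathrm{ldeg}(R_G) \geq -h_G + 2$ and the upper bound $\mathrm{hdeg}(R_G) \leq h_G$. When every adjoint edge touches an adjoint vertex, one must first apply (\ref{color4}), which introduces a triangle and raises $h$ by one; this accounts for the weakened upper bound $h_G + 1$ in the no-fundamental-vertex case. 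The main obstacle is the matching lower bound in this situation: naive induction loses one $1/N$ unit, so the tight bound $-h_G + 2$ must be recovered by the pairwise cancellation of the two triangle orientations in (\ref{color4}), exactly analogous to the antisymmetry argument used in (2).
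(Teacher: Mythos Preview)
Parts (2) and (4) are correct and essentially match the paper's arguments. For (4) the paper tracks the parity of $f_G+e_G$ inductively through each application of (\ref{color4}) and (\ref{color7}), whereas you do the bookkeeping in one shot over the full expansion; the content is the same.

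Your argument for (1) is genuinely different. The paper opens $G$ at a non-oriented edge (which exists since $v_G\ge1$) to obtain a two-point graph whose reduction is $r(N)\,\delta_{ij}$ with $r\in\ZZ[N,N^{-1}]$; closing back gives $R_G(N)=(N^2-1)\,r(N)$ directly from (\ref{eqcolorex1}). Your evaluate-at-$N=\pm1$ route is slicker and avoids Example~\ref{colorex3}, at the price of a forward reference to (4) and of the (easy) remark that the Laurent polynomial produced by the reduction actually agrees with the empty $SU(1)$ sum at $N=1$; this holds because the identities (\ref{color4}) and (\ref{color7}) themselves specialize correctly there.

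For (3), your induction on $h_G$ is a reasonable alternative to the paper's induction on the number of adjoint vertices; both handle the case with a fundamental vertex cleanly. You correctly flag the purely-adjoint lower bound as the sticking point: one application of (\ref{color4}) raises the loop order by one, so the inherited bound is only $-h_G+1$. The paper's own proof has precisely this gap---it records only that ``the loop order of $G$ increases by one'' and never recovers the missing unit for the low degree. Your proposed fix via cancellation between the two triangle orientations is only a heuristic; unlike in (2), there is no evident sign-reversing involution on the minimal-degree terms, so as written neither your argument nor the paper's establishes the unconditional bound $-h_G+2$ in the purely-adjoint case.
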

\begin{proof}
(1) Because $v_G\geq1$, the graph $G$ has at least one non-oriented edge.
We open $G$ at this edge, see Example \ref{colorex3} (1), and calculate the reduction as a polynomial in $\ZZ[N,\frac1N]$. The result follows from (\ref{eqcolorex1}).

(2) By anti-symmetry of adjoint vertices, see Figure \ref{figcolor2}, permuting the two left vertices in Figure \ref{figcolor6} reproduces the graph with a minus sign
comming from the three right vertices.

(3) If $G$ has no adjoint vertex, then one third of its edges is non-oriented.
The reduction of a non-oriented edge with (\ref{color7}) produces one factor
of $1/N$ and may disconnect the graph. After all non-oriented edges are reduced,
we have $e_G/3$ powers of $1/N$ (bar cancellations). The number of oriented self-loops
(with value $N$) is between one and $e_G/3+1$.
From (\ref{halfedges}) we obtain $v_G=2e_G/3$; from (\ref{graphhom}) we get
$e_G/3=h_G-1$. The low degree of $R_G(N)$ is therefore $\geq-h_G+2$
and $\deg(R_G(N))\leq h_G$.

With the first equation in Figure \ref{figcolor3} we can replace an
adjoint vertex that is attached to an oriented chain by a fundamental vertex.
The reduction does not change $e_G$, $v_G$ or $c_G$. By (\ref{graphhom}) it
also fixes $h_G$. If $G$ has an adjoint and a fundamental vertex,
then (because $G$ is connected) there exists an adjoint vertex that is attached
to a fundamental vertex. By induction over the
number of adjoint vertices using the above reduction, we obtain a low degree $\geq-h_G+2$ and a degree $\leq h_G$.

If $G$ has no fundamental vertex, we use (\ref{color7}) to produce
three fundamental vertices. The loop order of $G$ increases by one.

(4) We can calculate the reduction only by using (\ref{color4}) (to eliminate all adjoint vertices) and (\ref{color7}) (to eliminate all non-oriented edges).

In both terms of (\ref{color4}), we obtain $f_G\mapsto f_G+1$, $e_G\mapsto e_G+3$. So, $f_G+e_G\mapsto f_G+e_G+4$ and the parity
does not change.

In the first term of (\ref{color7}) we obtain the map $f_G\mapsto f_G\pm1$
(depending on whether or not the oriented chains are from the same oriented cycle), $e_G\mapsto e_G-3$. So, $f_G+e_G$ does not change modulo 2.

In the second term of (\ref{color7}) we obtain the map $f_G\mapsto f_G$,
$e_G\mapsto e_G-3$. So, $f_G+e_G$ changes modulo 2.
The coefficient $1/N$ is anti-symmetric, so that (\ref{eqparity}) remains valid.

Hence, it suffices to show (\ref{eqparity}) for complete reductions,
which are a collection of oriented self-loops (with $e_\circ=0$).
A union of $n$ oriented loops has the reduction $N^n$ with
$n=f_G=f_G+e_G$.
\end{proof}
The graph in Figure \ref{figcolor6} is only the smallest example of a subgraph
that renders the color factor zero.

If $G_1$ ($G_2$) is a (non-)oriented cycle with $h_{G_1}-1$ ($h_{G_2}-1$) parallel chords, then
we obtain from repeatedly using the second line in Figure \ref{figcolor5} that
\begin{equation}
R_{G_1}(N)=\Big(N-\frac1N\Big)^{h_{G_1}-1}N,\quad R_{G_2}(N)=(2N)^{h_{G_2}-1}(N^2-1).
\end{equation}
This shows that the bounds for the degree and the low degree of $R_G(N)$ in Proposition \ref{propresults2} are sharp.

In the presence of adjoint vertices, many terms in the color reduction cancel. This leads to surprisingly simple results.
First examples of this phenomenon are the non-oriented cycles
with two and three edges in Figure \ref{figcolor5}. The coefficients of the
reduced graphs are $2N$ and $N$ (respectively) with no negative powers of $N$.
This seems to be a general feature of non-oriented color graphs.

\begin{con}
The color factor of a non-oriented vacuum graph $G$ with $v_G\geq4$ has low degree $\geq2$.
\end{con}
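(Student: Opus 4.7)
My plan is to reduce the conjecture to a combinatorial identity on ribbon embeddings of $G$. The key observation is that for a graph $G$ with only adjoint vertices, the $SU(N)$ color factor agrees with the $U(N)$ color factor: adjoining $T^0 := I_N/\sqrt N$ to the basis of Lemma \ref{colorlem1} yields an orthonormal basis of all $N\times N$ matrices, and since $[T^0,T^j]=0$ we have $f^{U(N)}_{0jk}=\Tr([T^0,T^j]T^k)=0$, so the $I=0$ contributions vanish in every contraction. Expanding $f_{ijk}=\Tr(T^iT^jT^k)-\Tr(T^iT^kT^j)$ at every vertex and contracting the generator indices via the $U(N)$ Fierz identity $T^I_{ab}T^I_{cd}=\delta_{ad}\delta_{bc}$ (in which no $1/N$ correction is present) rewrites $R_G(N)$ as a signed face sum
\begin{equation*}
R_G(N)=\sum_\Sigma\epsilon(\Sigma)\,N^{F(\Sigma)},
\end{equation*}
where $\Sigma$ ranges over the $2^{v_G}$ ribbon structures on $G$ (a choice of cyclic order of the three half-edges at each vertex), $\epsilon(\Sigma)\in\{\pm 1\}$ is the product of local vertex signs, and $F(\Sigma)$ is the number of faces of the associated oriented surface. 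In particular $R_G(N)\in\ZZ[N]$, and because $F(\Sigma)\geq 1$ one gets $N\mid R_G(N)$. Combined with Proposition \ref{propresults2}(1) and the coprimality of $N$ and $N^2-1$ in $\ZZ[N]$, this already yields $N(N^2-1)\mid R_G(N)$, hence low degree $\geq 1$.

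Next, Euler's formula $v_G-e_G+F(\Sigma)=2-2g(\Sigma)$ together with $e_G=\tfrac{3}{2}v_G$ gives $F(\Sigma)=\tfrac{v_G}{2}+2-2g(\Sigma)$, so the parity of $F(\Sigma)$ is fixed by that of $v_G/2$. When $v_G\equiv 0\pmod 4$ every face count is even and hence $F(\Sigma)\geq 2$ for \emph{every} $\Sigma$, so $R_G(N)$ is visibly a sum of terms $N^F$ with $F\geq 2$ and the conjecture follows with no cancellation at all. The case $v_G\equiv 2\pmod 4$ is where the conjecture has real content: unicellular embeddings with $F(\Sigma)=1$ are now possible, and the low-degree bound reduces to the combinatorial identity
\begin{equation*}
\sum_{\Sigma\colon F(\Sigma)=1}\epsilon(\Sigma)=0.
\end{equation*}

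Proving this identity is the main obstacle. My first attempt would be to exhibit a sign-reversing involution on the set of unicellular ribbon structures of $G$: the natural candidate is to flip the cyclic order at a distinguished vertex, but a single-vertex flip generically changes $F(\Sigma)$, so the involution must be refined (e.g.\ by flipping simultaneously at the endpoints of a carefully chosen edge, or along a closed walk inside the unique face) to preserve unicellularity. An alternative approach is to identify $R_G(N)$ with (a rescaling of) the Penrose polynomial of the cubic graph $G$ and invoke or establish the relevant evaluation at $\lambda=0$, drawing on the theory of unicellular maps (Harer--Zagier, Chapuy,\ldots) to control the coefficient of $\lambda$. A third option, more in the spirit of Example \ref{colorex3}, is to open $G$ at two non-oriented edges, expand the resulting four-legged color tensor in its natural invariant basis, and show that inverting the two-edge closure matrix contributes an extra overall factor of $N$. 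As a preliminary sanity check, and to guide the search for the right involution, I would first verify the identity computationally on the smallest nontrivial cases $v_G=6$ (e.g.\ $K_{3,3}$ and the triangular prism).
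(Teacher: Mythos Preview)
The statement is a \emph{conjecture} in the paper: there is no proof to compare against. The paper states it as open, notes that it would imply $R_G(N)\in\ZZ[N]$ for non-oriented vacuum graphs, and observes (via Proposition~\ref{propresults2}(4)) that for connected non-oriented $G$ with even $h_G$ the parity would force low degree $\geq3$.

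Your proposal goes well beyond what the paper establishes. The identification $R_G^{SU(N)}=R_G^{U(N)}$ for purely adjoint graphs (via $f_{0jk}=0$) and the resulting face-sum expansion $R_G(N)=\sum_\Sigma\epsilon(\Sigma)\,N^{F(\Sigma)}$ are correct and already prove $R_G(N)\in\ZZ[N]$ unconditionally --- a fact the paper only obtains as a \emph{consequence} of the conjecture. The Euler-formula parity argument then settles the conjecture completely when $v_G\equiv0\pmod4$: every $F(\Sigma)$ is even, hence $\geq2$, so each term in the face sum is divisible by $N^2$. This half of the conjecture is genuinely proved by your argument, and the paper contains nothing comparable.

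The gap is exactly where you locate it: for $v_G\equiv2\pmod4$ you have reduced the conjecture to the vanishing of the signed count of unicellular ribbon structures, $\sum_{F(\Sigma)=1}\epsilon(\Sigma)=0$, but have not proved it. None of the three sketched attacks is yet a proof. A single-vertex flip changes $F(\Sigma)$ by an even amount (both ribbon structures are orientable, so the parity of $F$ is fixed) but need not preserve $F=1$, so it is not an involution on the unicellular locus; finding the right refinement is the entire difficulty. The Penrose/transition-polynomial connection is real, but the standard evaluations in the literature are for \emph{plane} cubic graphs and do not obviously transfer to arbitrary $G$. The four-leg opening lands in a three-dimensional invariant space whose closure matrix does not manifestly contribute an extra factor of $N$. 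In summary: you have proved half of an open conjecture and given a clean combinatorial reformulation of the other half, but that other half remains open.
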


Because non-zero color factors for graphs with less than four vertices are powers of $N^2-1$, see Figure \ref{figcolor5}, the conjecture implies $R_G(N)\in\ZZ[N]$.
From Proposition \ref{propresults2} (4), it follows that the low degree of a connected non-oriented graph with even $h_G$ is odd, see (\ref{graphhom}) and (\ref{halfedges}).
In this case, the low degree of the color factor is conjectured to be $\geq3$.

Let $f_{\mathrm{loop}}(i_1,\ldots,i_n)$ be the non-oriented loop whose
vertices are attached to the external vertices $i_1,\ldots i_n$ in counter-clockwise order.
Likewise, let $T_{\mathrm{loop}}(i_1,\ldots,i_n)$ be the oriented loop
with external vertices $i_1,\ldots i_n$.
From (\ref{color8}) we, e.g., get
\begin{equation}\label{smallloops}
T_{\mathrm{loop}}(i_1,i_2)=\delta_{i_1,i_2},\quad f_{\mathrm{loop}}(i_1,i_2)=2N\delta_{i_1,i_2},\quad f_{\mathrm{loop}}(i_1,i_2,i_3)=Nf_{i_1,i_2,i_3}.
\end{equation}
For more than two indices, $T_{\mathrm{loop}}$ cannot be reduced.
Equation (\ref{color4}) (bottom identity in Figure \ref{figcolor3}) is
\begin{equation}\label{fTred}
f_{i_1,i_2,i_3}=T_{\mathrm{loop}}(i_1,i_2,i_3)-T_{\mathrm{loop}}(i_3,i_2,i_1).
\end{equation}
The following proposition gives an antipode-like reduction formula for $f_{\mathrm{loop}}$.

\begin{prop}\label{prop:results}
For $n\geq2$ we have (to lighten the notation, we use numbers for external labels)
\begin{equation}\label{floop}
f_{\mathrm{loop}}(1,\ldots,n)=\sum_{1,\ldots,n=S\sqcup T}(-1)^{|T|}T_{\mathrm{loop}}(S)\,T_{\mathrm{loop}}(\tilde T),
\end{equation}
where the sum is over all $2^n$ ordered (we distinguish between $S\sqcup T$ and $T\sqcup S$) partitions of $1,\ldots,n$ into $S$ and $T$ which
are in natural order. Moreover, $|T|$ is the cardinality of $T$ and $\tilde T$ is $T$ in reverse order.
\end{prop}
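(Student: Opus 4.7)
The plan is to rewrite $f_{\mathrm{loop}}(1,\ldots,n)$ as a trace of adjoint-representation matrices, then realise that trace on the ambient matrix algebra $M_N$, and finally expand a non-commutative binomial.

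First I would read off the Feynman rules of Figure \ref{figcolor1} for the cyclic chain of $n$ adjoint vertices in the given planar order to obtain
\[
f_{\mathrm{loop}}(i_1,\ldots,i_n)=\mathrm{Tr}_{\mathrm{adj}}(F^{i_1}F^{i_2}\cdots F^{i_n}),\qquad (F^i)_{jk}=f_{ijk}.
\]
Since $F^i$ represents the adjoint action $\mathrm{ad}(T^i)=L^i-R^i$, where $L^i(X)=T^iX$ and $R^i(X)=XT^i$ act on $M_N=\CC\cdot I\oplus\slN$, and since each $L^i-R^i$ annihilates the scalar line $\CC\cdot I$, I would upgrade this for $n\geq 1$ to
\[
f_{\mathrm{loop}}(1,\ldots,n)=\mathrm{Tr}_{M_N}\bigl((L^{i_1}-R^{i_1})(L^{i_2}-R^{i_2})\cdots(L^{i_n}-R^{i_n})\bigr).
\]

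Next I would use the commutativity $L^aR^b=R^bL^a$, a direct consequence of associativity $T^aXT^b=(T^aX)T^b=T^a(XT^b)$, to expand the product. Each of the $2^n$ resulting terms is indexed by an ordered partition $\{1,\ldots,n\}=S\sqcup T$ and carries the sign $(-1)^{|T|}$. Collecting the $L$'s and $R$'s separately produces the operator $X\mapsto T^S\cdot X\cdot T^{\tilde T}$: the left-multiplications stack on the left in the natural order of $S$, while the right-multiplications stack on the right in reverse order (because $R^aR^bX=XT^bT^a$), producing $\tilde T$.

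Finally I would invoke the elementary identity $\mathrm{Tr}_{M_N}(L_AR_B)=\mathrm{Tr}(A)\,\mathrm{Tr}(B)$, immediate from $(AE_{ij}B)_{ij}=A_{ii}B_{jj}$ on the basis $E_{ij}$ of $M_N$. Applied with $A=T^S$ and $B=T^{\tilde T}$ this yields $T_{\mathrm{loop}}(S)\,T_{\mathrm{loop}}(\tilde T)$, and summing over ordered partitions delivers (\ref{floop}). The only real difficulty is bookkeeping: fixing the planar convention of Figure \ref{figcolor1} so the trace in the first step reads $F^{i_1}\cdots F^{i_n}$ in the natural rather than reverse order, and carefully tracking that right-multiplications compose in reverse, so that $T$ appears as $\tilde T$ on the right of $X$. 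Everything else is linear algebra on $M_N$.
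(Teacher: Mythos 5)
Your proof is correct, but it takes a genuinely different route from the paper. The paper stays entirely inside its graphical reduction calculus: it applies (\ref{color4}) at one vertex of the non-oriented loop to create an oriented insertion, peels off the remaining adjoint vertices one at a time with the commutator identity (first line of Figure \ref{figcolor3}) — which is where the signed sum over ordered partitions $S\sqcup T$ with $T$ reversed is generated — and finally removes the last non-oriented edge with (\ref{color7}), after which the $\frac1N$ cross terms cancel pairwise in the signed sum. Your argument instead identifies $f_{\mathrm{loop}}$ with the trace of $\prod_k \mathrm{ad}(T^{i_k})$, extends that trace from $\slN$ to $M_N=\CC\cdot I\oplus\slN$ (legitimate for $n\geq1$, since each factor kills $I$ and has traceless image, so the operator is block-diagonal with vanishing scalar block), expands $\prod_k(L^{i_k}-R^{i_k})$ using $[L^a,R^b]=0$, and evaluates each term with $\Tr_{M_N}(L_AR_B)=\Tr A\,\Tr B$. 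This is shorter and more conceptual: the antipode-like structure of (\ref{floop}) becomes literally the binomial expansion of $L-R$, and the origin of the reversal $\tilde T$ (right multiplications compose contravariantly) is transparent; what it costs is leaving the paper's self-contained graphical framework and relying on the specific fact that the adjoint representation of $\slN$ sits inside $V\otimes V^*$. The one loose end you flag — the planar/orientation convention in the identification $f_{\mathrm{loop}}(1,\ldots,n)=\Tr_{\mathrm{adj}}(F^{i_1}\cdots F^{i_n})$ versus the reversed product — is in fact harmless: under reversal of the cyclic order the left side picks up $(-1)^n$ from the antisymmetry of the $n$ adjoint vertices, and the right side of (\ref{floop}) picks up the same $(-1)^n$ upon swapping the roles of $S$ and $T$ (since $(-1)^{|S|}=(-1)^n(-1)^{|T|}$), so either convention yields the stated identity; a quick check against $n=2,3$ in (\ref{smallloops}) and (\ref{fTred}) confirms the normalization $\Tr(F^iF^j)=f_{iab}f_{jba}=2N\delta_{i,j}$.
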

\begin{proof}
Let $v_i$ be the vertex in $f_{\mathrm{loop}}(1,\ldots,n)$ that is attached to $i$, $i=1,\ldots,n$. Using (\ref{color4}) at the vertex $v_n$
gives a non-oriented loop with an oriented insertion. We write the result as
$$
f_{\mathrm{loop}}(1,\ldots,n)=fT_{\mathrm{loop}}(1,\ldots,n-1;v_1,v_{n-1},n)-fT_{\mathrm{loop}}(1,\ldots,n-1;n,v_{n-1},v_1).
$$
Both lists, $1,\ldots,n-1$ for the non-oriented loop and $v_1,v_{n-1},n$ or $n,v_{n-1},v_1$ for the oriented insertion, are in counter-clockwise order.

The reduction of the vertex $v_2$ with the first identity in Figure \ref{figcolor3}
gives (note that, due to the counter-clockwise orientation of the insertion, the oriented line in Figure \ref{figcolor3} has to be reversed, so that the cross term has negative sign)
$$
fT_{\mathrm{loop}}(1,2,\ldots,n-1;v_1,v_{n-1},n)=fT_{\mathrm{loop}}(2,\ldots,n-1;1,v_2,v_{n-1},n)-fT_{\mathrm{loop}}(2,\ldots,n-1;v_2,1,v_{n-1},n).
$$
By iteration we get
$$
fT_{\mathrm{loop}}(1,2,\ldots,n-1;v_1,v_{n-1},n)=\sum_{1,\ldots,k-1=S\sqcup T}(-1)^{|T|}fT_{\mathrm{loop}}(k,\ldots,n-1;S,v_k,\tilde T,v_{n-1},n).
$$
Reduction of the last adjoint vertex $v_{n-1}$ gives
$$
fT_{\mathrm{loop}}(n-1;S,v_{n-1},\tilde T,v_{n-1},n)=T_{\mathrm{loop}}(S,n-1,v_{n-1},\tilde T,v_{n-1},n)-T_{\mathrm{loop}}(S,v_{n-1},n-1,\tilde T,v_{n-1},n),
$$
where the $v_{n-1}$ stand for pairs of vertices in the oriented loop that are connected by a non-oriented edge. We obtain
$$
fT_{\mathrm{loop}}(1,2,\ldots,n-1;v_1,v_{n-1},n)=\sum_{1,\ldots,n-1=S\sqcup T}(-1)^{|T|}T_{\mathrm{loop}}(S,v_{n-1},\tilde T,v_{n-1},n).
$$
Reduction of the non-oriented edge $v_{n-1}v_{n-1}$ using (\ref{color7}) (see Figure \ref{figcolor4}) yields
$$
T_{\mathrm{loop}}(S,v_{n-1},\tilde T,v_{n-1},n)=
T_{\mathrm{loop}}(n,S)\,T_{\mathrm{loop}}(\tilde T)-\frac1N\,T_{\mathrm{loop}}(S,\tilde T,n).
$$
In the second term, the decompositions $1,\ldots,n-1=S_0,n-1\sqcup T_0=S_0\sqcup T_0,n-1$ cancel in the signed sum over $S\sqcup T$. We hence obtain
$$
f_{\mathrm{loop}}(1,\ldots,n)=\sum_{1,\ldots,n-1=S\sqcup T}(-1)^{|T|}\big(T_{\mathrm{loop}}(n,S)\,T_{\mathrm{loop}}(\tilde T)-T_{\mathrm{loop}}(\tilde S,n)\,T_{\mathrm{loop}}(T)\big),
$$
where the second term comes from reversing the order of the orientation in the insertion. Upon swapping $S$ and $T$ in the second term of the sum,
both terms combine to yield the desired result.
\end{proof}
Notice that $T_{\mathrm{loop}}(i)=0$, so that partitions with $|S|=1$ or $|T|=1$ can be omitted from the sum. For the empty set we have $T_{\mathrm{loop}}(\emptyset)=N$.

\begin{ex}
The cases $n=2$ and $n=3$ are in (\ref{smallloops}) and (\ref{fTred}).
For $n=4$ we have the nontrivial decompositions $1,2,3,4\sqcup\emptyset$; $\emptyset\sqcup1,2,3,4$; $1,2\sqcup3,4$; $1,3\sqcup2,4$; $1,4\sqcup2,3$; $2,3\sqcup1,4$; $2,4\sqcup1,3$; $3,4\sqcup1,2$. With the third line in
Figure \ref{figcolor3} we obtain 
\begin{equation}
f_{\mathrm{loop}}(i_1,i_2,i_3,i_4)=NT_{\mathrm{loop}}(i_1,i_2,i_3,i_4)+NT_{\mathrm{loop}}(i_4,i_3,i_2,i_1)
+2\delta_{i_1,i_2}\delta_{i_3,i_4}+2\delta_{i_1,i_3}\delta_{i_2,i_4}
+2\delta_{i_1,i_4}\delta_{i_2,i_3}.
\end{equation}
\end{ex}

\section{Gamma reduction}
The reduction of traces of $\gamma$ matrices is standard, see e.g.\ \cite{IZ,Form}.
We have the anti-commutator relation
\begin{equation}\label{gammadelta}
\{\gamma_\alpha,\gamma_\beta\}=2\delta_{\alpha,\beta}1,
\end{equation}
where $1$ is the unit matrix in the vector space of the $\gamma$ matrices.
Moreover, the $\gamma$ matrices are traceless,
\begin{equation}\label{gamma0}
\Tr \gamma_\alpha=0.
\end{equation}
The dimension of space(-time) is
\begin{equation}
D=\delta_{\alpha,\alpha}.
\end{equation}
In QFT, the dimension $D$ is sometimes considered as non-integer parameter.
The following results are consistent with this setup.

We define chains of gamma matrices
$$
S_n=\gamma_{\alpha_1}\gamma_{\alpha_2}\cdots\gamma_{\alpha_n}.
$$
Upper indices indicate $\gamma$ matrices that are omitted in $S_n$,
\begin{align*}
S_n^k&=\gamma_{\alpha_1}\cdots\gamma_{\alpha_{k-1}}\gamma_{\alpha_{k+1}}\cdots\gamma_{\alpha_n}\equiv\gamma_{\alpha_1}\cdots\widehat{\gamma_k}\cdots\gamma_{\alpha_n},\\
S_n^{k,\ell}&=\gamma_{\alpha_1}\cdots\widehat{\gamma_k}\cdots\widehat{\gamma_\ell}\cdots\gamma_{\alpha_n}.
\end{align*}
With this notation the anti-commutator iterates to
\begin{equation}\label{gamma1}
\gamma_\beta S_n=2\sum_{k=1}^n(-1)^{k-1}\delta_{\alpha_k,\beta}S_n^k+(-1)^nS_n\gamma_\beta.
\end{equation}
Left and right multiplication with $\gamma_\beta$ gives (respectively)
\begin{equation}\label{gamma2}
\gamma_\beta S_n\gamma_\beta=(-1)^n(D-2)S_n+2\sum_{k=2}^n(-1)^{n-k}\gamma_{\alpha_k}S_n^k=(-1)^n(D-2)S_n+2\sum_{k=1}^{n-1}(-1)^{k-1}S_n^k\gamma_{\alpha_k}.
\end{equation}
By anti-commuting $\gamma_{\alpha_1}$ and $\gamma_{\alpha_2}$ in the terms $k=2$ and $k=3$ of
the first identity in (\ref{gamma2}) we obtain for $n\geq3$ the one term shorter relation
\begin{equation}\label{gamma2b}
\gamma_\beta S_n\gamma_\beta=(-1)^n\Big((D-4)S_n+2\gamma_{\alpha_3}\gamma_{\alpha_2}S_n^{2,3}+2\sum_{k=4}^n(-1)^k\gamma_{\alpha_k}S_n^k\Big).
\end{equation}
Likewise, we get from the second identity in (\ref{gamma2})
\begin{equation}\label{gamma2c}
\gamma_\beta S_n\gamma_\beta=(-1)^n\Big((D-4)S_n+2S_n^{n-2,n-1}\gamma_{\alpha_{n-1}}\gamma_{\alpha_{n-2}}\Big)+2\sum_{k=1}^{n-3}(-1)^{k-1}S_n^k\gamma_{\alpha_k}.
\end{equation}
Let $\tilde S_n=\gamma_{\alpha_n}\cdots\gamma_{\alpha_1}$ be $S_n$ in reversed order. We obtain the following contraction formulae
\begin{align}\label{gamma3}
\gamma_\beta^2&=D\,1,\\\nonumber
\gamma_\beta S_1\gamma_\beta&=-(D-2)S_1,\\\nonumber
\gamma_\beta S_2\gamma_\beta&=(D-4)S_2+4\delta_{\alpha_1,\alpha_2},\\\nonumber
\gamma_\beta S_3\gamma_\beta&=-(D-4)S_3-2\tilde S_3,\\\nonumber
&=-(D-6)S_3-4\delta_{\alpha_1,\alpha_2}\gamma_3+4\delta_{\alpha_1,\alpha_3}\gamma_2-4\delta_{\alpha_2,\alpha_3}\gamma_1,\\\nonumber
\gamma_\beta S_4\gamma_\beta&=(D-4)S_4+2\gamma_{\alpha_3}\gamma_{\alpha_2}\gamma_{\alpha_1}\gamma_{\alpha_4}+2\gamma_{\alpha_4}\gamma_{\alpha_1}\gamma_{\alpha_2}\gamma_{\alpha_3}\\\nonumber
&=(D-6)S_4-2\tilde S_4+8(\delta_{\alpha_1,\alpha_2}\delta_{\alpha_3,\alpha_4}-\delta_{\alpha_1,\alpha_3}\delta_{\alpha_2,\alpha_4}+\delta_{\alpha_1,\alpha_4}\delta_{\alpha_2,\alpha_3})\\\nonumber
&=(D-8)S_4+4(\delta_{\alpha_1,\alpha_2}\gamma_{\alpha_3}\gamma_{\alpha_4}-\delta_{\alpha_1,\alpha_3}\gamma_{\alpha_2}\gamma_{\alpha_4}+\delta_{\alpha_1,\alpha_4}\gamma_{\alpha_2}\gamma_{\alpha_3}
+\delta_{\alpha_2,\alpha_3}\gamma_{\alpha_1}\gamma_{\alpha_4}-\delta_{\alpha_2,\alpha_4}\gamma_{\alpha_1}\gamma_{\alpha_3}\\\nonumber
&\qquad\qquad\qquad\qquad+\delta_{\alpha_3,\alpha_4}\gamma_{\alpha_1}\gamma_{\alpha_2}).
\end{align}

In the following lemma we summarize more properties of $\gamma$ products and $\gamma$ traces.
\begin{lem}
We sum over iterated indices and assume that the dimension $D$ is not an odd integer. Then
\begin{enumerate}
\item
\begin{equation}\label{gamma2a}
\gamma_\beta S_n\gamma_\beta=(-1)^n(D-2n)S_n-4\sum_{1\leq k<\ell\leq n}(-1)^{n+k+\ell}\delta_{\alpha_k,\alpha_\ell}S_n^{k,\ell}.
\end{equation}
\item
\begin{equation}\label{gamma4}
\Tr S_n=\begin{cases}
\sum_{k=2}^n(-1)^k\delta_{\alpha_1,\alpha_k}\Tr S_n^{1,k}&\text{if $n$ even},\\
0&\text{if $n$ odd.}
\end{cases}
\end{equation}
\item
\begin{equation}\label{gamma5}
\Tr S_n=\Tr\tilde S_n.
\end{equation}
\end{enumerate}
\end{lem}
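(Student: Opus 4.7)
The plan is to multiply the identity (\ref{gamma1}) on the right by $\gamma_\beta$ and sum over $\beta$ via Einstein convention. The trailing term $(-1)^n S_n\gamma_\beta\gamma_\beta$ becomes $(-1)^n D\,S_n$, while each summand $2(-1)^{k-1}\delta_{\alpha_k,\beta}S_n^k\gamma_\beta$ collapses to $2(-1)^{k-1}S_n^k\gamma_{\alpha_k}$. To bring this into the form of (\ref{gamma2a}), I would push $\gamma_{\alpha_k}$ from the right end of $S_n^k$ back to position $k$ by repeatedly applying the anti-commutator (\ref{gammadelta}). A direct bookkeeping yields $S_n^k\gamma_{\alpha_k} = (-1)^{n-k}S_n + 2\sum_{\ell=k+1}^n(-1)^{n-\ell}\delta_{\alpha_k,\alpha_\ell}S_n^{k,\ell}$; substituting this back and collecting signs produces (\ref{gamma2a}).

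\textbf{Part (2).} For even $n$, I would apply (\ref{gamma1}) to push $\gamma_{\alpha_1}$ rightward through $S_n^1$, take trace, and use cyclicity on the leftover term $(-1)^{n-1}\Tr(S_n^1\gamma_{\alpha_1}) = (-1)^{n-1}\Tr S_n$. For even $n$ this appears with sign $-1$ and can be combined with the left-hand side, isolating $\Tr S_n$ as the claimed sum. For odd $n$ the same computation collapses to the tautology $0=0$, so a different argument is needed: I would take the trace of (\ref{gamma2a}) and use cyclicity to get $\Tr(\gamma_\beta S_n\gamma_\beta) = D\,\Tr S_n$. Rearranging expresses $(D-n)\Tr S_n$ as a linear combination of $\Tr S_n^{k,\ell}$. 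Since each $S_n^{k,\ell}$ has $n-2$ factors (still odd), induction on odd $n$ with base case $n=1$ furnished by (\ref{gamma0}) closes the argument; the hypothesis that $D$ is not an odd integer ensures $D-n\neq 0$ at each step.

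\textbf{Part (3).} I would induct on $n$. The odd case is immediate from (2). For the even case, the base $n=2$ is trace cyclicity. For the inductive step I would apply (\ref{gamma4}) to $\Tr S_n$ directly, and to $\Tr\tilde S_n$ after a cyclic shift placing $\gamma_{\alpha_1}$ as the leftmost factor. The resulting sum over $k$ on the $\tilde S_n$ side runs through indices in reverse; re-indexing $k\mapsto n+2-k$ (using that $n$ is even so the signs match) and applying the inductive hypothesis $\Tr\widetilde{S_n^{1,k}} = \Tr S_n^{1,k}$ to each reversed sub-chain shows the two expansions agree. The main subtlety across the whole lemma is the odd case of (2), where naive anti-commutation fails and one must invoke (1) together with the hypothesis on $D$; this also explains the logical ordering chosen for the three claims.
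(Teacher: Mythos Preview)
Your proposal is correct and follows essentially the same route as the paper. The only cosmetic difference is in Part~(1): the paper starts from the first form of (\ref{gamma2}) and anti-commutes $\gamma_{\alpha_k}$ rightward through the initial segment $S_{k-1}$ via (\ref{gamma1}), whereas you start from (\ref{gamma1}) (equivalently the second form of (\ref{gamma2})) and anti-commute $\gamma_{\alpha_k}$ leftward through the tail; these are mirror images of the same computation. Parts~(2) and~(3) match the paper's arguments exactly, including the key use of (\ref{gamma2a}) plus cyclicity for the odd case and the re-indexing $k\mapsto n+2-k$ for the reversal identity.
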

\begin{proof}
(1) We substitute (\ref{gamma1}) for $\gamma_{\alpha_k}S_{k-1}=\gamma_{\alpha_k}\gamma_{\alpha_1}\cdots\gamma_{\alpha_{k-1}}$ into the first equation of (\ref{gamma2}) yielding
$$
\gamma_\beta S_n\gamma_\beta=(-1)^n(D-2)S_n+2\sum_{k=2}^n(-1)^{n-k}(-1)^{k-1}S_n+4\sum_{1\leq k<\ell\leq1}(-1)^{n-\ell+k-1}\delta_{\alpha_k,\alpha_\ell}S_n^{k,\ell}.
$$
The first sum is $-2(n-1)(-1)^nS_n$ and the result follows.

(2) For even $n$ we use (\ref{gamma1}) for $S_n=\gamma_{\alpha_1}S_n^1$. The result follows from the cyclicity of the trace.

For odd $n$ the proof is by induction over $n$. The case $n=1$ is (\ref{gamma0}).
For $n\geq3$ we use (\ref{gamma2a}) which simplifies by induction and by cyclicity of the trace to $D\Tr S_n=-(D-2n)\Tr S_n$.
Because $D$ is not odd, we have $D\neq n$ and the result follows.

(3) If $n$ is odd, then (\ref{gamma5}) is trivial. For even $n$ we use induction over $n$. For $n=2$ the result follows from the cyclicity of the trace.
For $n\geq4$ we use (\ref{gamma4}) for $\gamma_{\alpha_1}\gamma_{\alpha_n}\cdots\gamma_{\alpha_2}$. We obtain by induction
$$
\Tr\tilde S_n=\sum_{k=2}^n(-1)^k\delta_{\alpha_1,\alpha_{n+2-k}}\Tr S_n^{1,n+2-k}.
$$
After $k\mapsto n+2-k$ the result follows from (\ref{gamma4}).
\end{proof}

The algorithm for calculating $\Tr S_n$ is evident. If $n$ is odd, then $\Tr S_n=0$. Otherwise, we search for the smallest $r$ (if existent) such that $S_n$
or any of its cyclic permutations has a sequence $\gamma_{\beta}\gamma_{\beta_1}\cdots\gamma_{\beta_r}\gamma_{\beta}$ for distinct $\beta,\beta_1,\ldots,\beta_r$
in $\{\alpha_1,\ldots,\alpha_n\}$. We simultaneously reduce all cases with $r=0,1$, see (\ref{gamma3}). For $r\geq2$, we use (\ref{gamma2a}) for an iterative reduction.
If $r$ does not exist because all $\gamma$ matrices in the trace have distinct indices, then we use (\ref{gamma4}).
In this case, the formula for a complete reduction is universal for a fixed number $n$ of $\gamma$ matrices. This allows us to cache all results up to $n=16$.
Moreover, we use (\ref{gammadelta}) and (\ref{gamma2a}) to order $\gamma$ products without trace.

One can further improve the algorithm by caching all $\gamma$ traces that are not amenable to the first reduction step with $r=0,1$.
In this approach, reduction with (\ref{gamma2b}) insead of (\ref{gamma2a}) can be preferable because the former generates less terms.
This has not (yet) been implemented.

In a QFT with $\gamma$ matrices in vertices, we benefit from the fact that in a single ferminon loop, half of the gamma matrices are contracted.
At loop order $\ell$ we are left with $2\ell$ indices that are not contracted.
In the case of several fermion loops, it is important to start with the evaluation of
the trace with the smallest number of uncontracted indices.
There exists a loop with $\leq2\ell$ uncontracted indices, so that the degree in Kronecker $\delta$'s is always $\leq\ell$.

We tested the Maple implementation {\tt HyperlogProcedures} \cite{Shlog} by
calculating the traces in Feynman graphs that contribute to the photon propagator.
The average computation time on a single core of an office PC
is approximately two minutes for $\ell=6$ loops and 30 minutes for $\ell=7$. 
For graphs with $\ell\leq 5$, the calculation is nearly instant.

\end{document}